\renewcommand{\algocf@caption@boxruled}{%
  \hrule
  \hbox to \hsize{%
    \vrule\hskip-0.4pt
    \vbox{   
       \vskip\interspacetitleboxruled%
       \unhbox\algocf@capbox\hfill
       \vskip\interspacetitleboxruled
       }%
     \hskip-0.4pt\vrule%
   }\nointerlineskip%
}%
\newcommand{\PreserveBackslash}[1]{\let\temp=\\#1\let\\=\temp}
\newcolumntype{C}[1]{>{\PreserveBackslash\centering}p{#1}}
\newcolumntype{R}[1]{>{\PreserveBackslash\raggedleft}p{#1}}
\newcolumntype{L}[1]{>{\PreserveBackslash\raggedright}p{#1}}
\newcommand{\cB}{\mathcal{B}}
\newcommand{\cC}{\mathcal{C}}
\newcommand{\cE}{\mathcal{E}}
\newcommand{\cH}{\mathcal{H}}
\newcommand{\cP}{\mathcal{P}}
\newcommand{\cS}{\mathcal{S}}
\newcommand\ChangeRT[1]{\noalign{\hrule height #1}}
\def\oa{\overline{a}}
\def\ob{\overline{b}}
\def\oc{\overline{c}}
\newcommand\price{\operatorname{{\it price}}}
\newcommand\cost{\operatorname{{\it cost}}}
\begin{document}

\title{Approximating minimum representations\\ of key Horn functions\thanks{Krist\'of is supported by the J\'anos Bolyai Research Fellowship of the Hungarian Academy of Sciences. This research was supported by SVV
project number 260 453.}}
%
%
\author{
Krist\'of B\'erczi\inst{1}\orcidID{0000-0003-0457-4573} \and
Endre Boros\inst{2}\orcidID{0000-0001-8206-3168} \and
Ond\v{r}ej \v{C}epek\inst{3}\orcidID{0000-0002-6325-0897} \and
Petr Ku\v{c}era\inst{3}\orcidID{0000-0002-7512-6260}\and
Kazuhisa Makino\inst{4}
}
\authorrunning{K. B\'erczi et al.}
%
\institute{MTA-ELTE Egerv\'ary Research Group, Department of Operations Research, E\"otv\"os University, Budapest, Hungary.
\email{berkri@cs.elte.hu}
\and
MSIS Department and RUTCOR, Rutgers University, New Jersey, USA.
\email{endre.boros@rutgers.edu}
\and
Charles University, Faculty of Mathematics and Physics, Department of Theoretical Computer Science and Mathematical Logic, Praha, Czech Republic.
\email{\{cepek,kucerap\}@ktiml.mff.cuni.cz}
\and
Research Institute for Mathematical Sciences (RIMS) Kyoto University, Kyoto, Japan.
\email{makino@kurims.kyoto.ac.jp}}

\maketitle              
\begin{abstract}
Horn functions form a subclass of Boolean functions and appear in many different areas of computer science and mathematics as a general tool to describe implications and dependencies. Finding minimum sized representations for such functions with respect to most commonly used measures is a computationally hard problem that remains hard even for the important subclass of key Horn functions. In this paper we provide logarithmic factor approximation algorithms for key Horn functions with respect to all measures studied in the literature for which the problem is known to be hard. 

\keywords{Approximation algorithms \and Horn minimization \and Key Horn \and  Directed hypergraphs \and Implicational systems.}
\end{abstract}
\section{Introduction} \label{sec:intro}

A Boolean function of $n$ variables is a mapping from $\{0,1\}^n$ to $\{0,1\}$.
Boolean functions naturally appear in many areas of mathematics and computer
science and constitute a principal concept in complexity theory. In this paper we shall study an important problem connected to Boolean functions, a so called Boolean minimization problem, which aims at finding a shortest possible representation of a given Boolean function. The formal statement of the Boolean minimization problem (BM) of course depends on (i) how the input function is represented, (ii) how it is represented on the output, and (iii) the way how the output size is measured. 

One of the most common representations of Boolean functions are conjunctive normal forms (CNFs), the conjunctions of clauses which are elementary disjunctions of literals. There are two usual ways how to measure the size of a CNF: the number of clauses and the total number of literals (sum of clause lengths). It is easy to see that BM is NP-hard if both input and output is a CNF (for both above mentioned measures of the output size). This is an easy consequence of the fact that BM contains the CNF satisfiability problem (SAT) as its special case (an unsatisfiable formula can be trivially recognized from its shortest CNF representation). In fact, BM was shown to be in this case probably harder than SAT: while SAT is NP-complete (i.e. $\Sigma^p_1$-complete~\cite{C71}), BM is $\Sigma^p_2$-complete~\cite{U01} (see also the review paper~\cite{UVS06} for related results). It was also shown that BM is $\Sigma^p_2$-complete when considering Boolean functions represented by general formulas of constant depth as both the input and output for BM~\cite{BU11}.

Horn functions form a subclass of Boolean functions which plays a fundamental role in constructive logic and computational logic. They are important in automated theorem proving and relational databases. An important feature of Horn functions is that SAT is solvable for this class in linear time \cite{DOWLING1984267}. A CNF is Horn if every clause in it contains at most one positive literal, and it is pure Horn (or definite Horn in some literature) if every clause in it contains exactly one positive literal. A Boolean function is (pure) Horn, if it admits a (pure) Horn CNF representation. Pure Horn functions represent a very interesting concept which was studied in many areas of computer science and mathematics under several different names. The same concept appears as directed hypergraphs in graph theory and combinatorics, as implicational systems in artificial intelligence and database theory, and as lattices and closure systems in algebra and concept lattice analysis \cite{CASPARD2003241}. Consider a pure Horn CNF $\Phi=(\oa \vee b) \wedge (\ob \vee a) \wedge (\oa \vee \oc \vee d) \wedge (\oa \vee \oc \vee e)$ on variables $a,b,c,d,e$, where $\oa$ stands for the negation of $a$, etc. The equivalent directed hypergraph is $\cH=(V,\cE)$ with vertex set $V=\{a,b,c,d,e\}$ and directed hyperarcs
$\cE=\{ (\{a\}, b), (\{b\}, a), (\{a,c\}, d), (\{a,c\}, e)\}$. 
This latter can be expressed more concisely using a generalization of adjacency lists for ordinary digraphs in which all hyperarcs with the same body (also called source) are grouped together
$\{a\}: b, \{b\}: a, \{a,c\}: d,e$, or can be represented 
as an implicational (closure) system on variables $a,b,c,d,e$ defined by rules
$a \rightarrow b,  b \rightarrow a,  ac \rightarrow de$.

Interestingly, in each of these areas the problem similar to BM, i.e. a problem of finding the shortest equivalent representation of the input data (CNF, directed hypergraph, set of rules) was studied. For example, such a representation can be used to reduce the size of knowledge bases in expert systems, thus improving the performance of the system. The above examples show that a ``natural" way how to measure the size of the representation depends on the area. Six different measures and corresponding concepts of minimality were considered in~\cite{ADS86,CH11}: 
(B) number of bodies, 
(BA) body area,
(TA) total area,
(C) number of clauses,
(BC) number of bodies and clauses, and
(L) number of literals. For precise definitions, see Section~\ref{sec:prelim}.
With a slight abuse of notations we shall use (B), (BA), (TA), (C), (BC) and (L) to denote both the measures and the corresponding minimization problems. 


The only one of these six minimization problems for which a polynomial time procedure exists to derive a minimum representation is (B). The first such algorithm appeared in database theory literature~\cite{Maier80}. Different algorithms for the same task were then independently discovered in hypergraph theory~\cite{ADS86}, and in the theory of closure systems~\cite{GD86}.

For the remaining five measures it is NP-hard to find the shortest representation. There is an extensive literature on the intractability results in various contexts for these minimization problems~\cite{ADS86,HK93,Maier80}. It was shown that (C) and (L) stay NP-hard even when the inputs are limited to cubic (bodies of size at most two) pure Horn CNFs~\cite{BCK13}, and the same result extends to the remaining three measures. Note that if all bodies are of size one then the above problems become equivalent with the transitive reduction of directed graphs, which is tractable \cite{doi:10.1137/0201008}. It should be noted that there exists many other tractable subclasses, such as acyclic and quasi-acyclic pure Horn CNFs~\cite{HK95}, and CQ Horn CNFs~\cite{BCKK09}. There are also few heuristic minimization algorithms for pure Horn CNFs~\cite{BCK98}. 

It was shown that (C) and (L) are not only hard to solve exactly but even hard to approximate. More precisely, \cite{turan} shows that these problems are inapproximable within a factor $2^{\log^{1-\varepsilon}(n)}$ assuming $NP\subsetneq DTIME(n^{polylog(n)})$, where $n$ denotes the number of variables. In addition, \cite{BG14} shows that they are inapproximable within a factor $2^{\log^{1-o(1)}n}$ assuming $P\subsetneq NP$ even when the input is restricted to 3-CNFs with $O(n^{1+\varepsilon})$ clauses, for some small $\varepsilon>0$. It is not difficult to see that the same proof extends to (BC) and (TA) as well. On the positive side, (C), (BC), (BA), and (TA) admit $(n-1)$-approximations and (L) has an $\binom{n}{2}$-approximation \cite{HK93}. To the best of our knowledge, no better approximations are known even for pure Horn 3-CNFs.

Given a relational database, a key is a set of attributes with the property that a value assignment to this set uniquely determines the values of all other attributes~\cite{DM83,JU84}.
Analogously, we say that a pure Horn function is \emph{key Horn} if any of its bodies implies all other variables, that is, setting all variables in any of its bodies to one forces all other variables to one. This is a weaker concept than a database key, where setting the attributes in a key to any set of values determines the values of all remaining attributes. Key Horn functions are a generalization of a well studied class of \emph{hydra functions} considered in \cite{Sloan2017HydrasDH}. For this special class defined by the additional requirement that all bodies are of size two, a $2$-approximation algorithm for (C) was presented in \cite{Sloan2017HydrasDH} while the NP-hardness for (C) was proved in \cite{Kucera2014HydrasCO}. The latter result implies NP-hardness for hydra functions also for (BC), (TA), and (L). It is also easy to see that (B) and (BA) are trivial in this case.


In this paper we consider the minimization problems for key Horn functions. Any irredundant representation of a key Horn function has the same set of bodies, implying that problems (B) and (BA) are in P. We show that a simple algorithm gives a 2-approximation for (TA) and a $k$-approximation for (C), (BC), and (L), where $k$ is the size of a largest body. Our paper contains two main results. The first one improves the $(n-1)$-approximation bound for (C) and (BC) to $\min\{\lceil\log n\rceil+1,\lceil\log k\rceil +2\}$ in the case of key Horn functions. The second result improves the $\binom{n}{2}$-approximation bound for (L) to $\frac{108}{17}\lceil\log k\rceil+2$. Table~\ref{tab:results} summarizes the state of the art of Horn minimization and the results presented in this paper for key Horn functions. 

\begin{table}[htbp]
    \centering
    \caption{Complexity landscape of Horn and key Horn minimization, where the bold letters represent the results obtained in this paper. 
    Here $n$ and $k$ respectively denote the number of variables and the size of a largest body. All problems except those labeled by P are NP-hard. Inapproximability bounds for Horn minimization hold even when the size of the bodies are bounded by $k$ ($\geq 2$).}
    {\setlength{\extrarowheight}{4pt}
    \begin{tabular}{!{\vrule width 1pt}C{0.12\linewidth}
                    !{\vrule width 0.8pt}C{0.22\linewidth}
                    |C{0.12\linewidth}
                    !{\vrule width 0.8pt}C{0.43\linewidth}
                    !{\vrule width 1pt}}
        \ChangeRT{1pt}
        \multirow{2}{*}{Measure} & \multicolumn{2}{c!{\vrule width 0.8pt}}{ Horn} & Key Horn\\ [3pt] 
        \cline{2-4}
         & Inapprox. & Approx. & Approx.  \\[3pt]
        \ChangeRT{0.8pt}
        (B) & \multicolumn{2}{c!{\vrule width 0.8pt}}{P${}^{\mbox{\tiny\cite{Maier80}}}$} & P$^{\mbox{\tiny\cite{Maier80}}}$  \\[3pt]
        \ChangeRT{0.4pt}
        (BA) & $1^{\mbox{\tiny\cite{ADS86}}}$ & ${n-1}^{\mbox{\tiny\cite{HK93}}}$ & {\bf P}  \\[5pt]
        \ChangeRT{0.4pt}
        (TA) & ${2^{O(\log^{1-o(1)}n)}}^{\mbox{\tiny\cite{BG14}}}$ & ${n-1}^{\mbox{\tiny\cite{HK93}}}$ & {\bf 2}  \\[3pt]
        \ChangeRT{0.4pt}
        (C) & ${2^{O(\log^{1-o(1)}n)}}^{\mbox{\tiny\cite{BG14}}}$ & ${n-1}^{\mbox{\tiny\cite{HK93}}}$ &  $\boldsymbol{\min\{\lceil\log n\rceil + 1,\lceil\log k\rceil+2,k\}}$\\[3pt]
        \ChangeRT{0.4pt}    
        (BC) & ${2^{O(\log^{1-o(1)}n)}}^{\mbox{\tiny\cite{BG14}}}$ & ${n-1}^{\mbox{\tiny\cite{HK93}}}$ & $\boldsymbol{\min\{\lceil\log n\rceil + 1,\lceil\log k\rceil+2,k\}}$\\[3pt]
        \ChangeRT{0.4pt}
        (L) & ${2^{O(\log^{1-o(1)}n)}}^{\mbox{\tiny\cite{BG14}}}$ & $\binom{n}{2}^{\mbox{\tiny\cite{HK93}}}$  & $\boldsymbol{\min\{\frac{108}{17}\lceil\log k\rceil+2,k\}}$ \\[3pt]
        \ChangeRT{1pt}
    \end{tabular}}
    \label{tab:results}
\end{table}

The structure of our paper is as follows: Section \ref{sec:prelim} introduces the necessary definitions and notation, Section \ref{sec:lower} provides lower bounds for the measures we introduced, Section \ref{sec:approx} contains our results about approximation algorithms, while Section \ref{sec:ex} discusses the relation to the problem of finding a minimum weight strongly connected subgraph. 

\section{Preliminaries} \label{sec:prelim}

Let $V$ denote a set of variables. Members of $V$ are called \emph{positive} while their negations are called \emph{negative literals}. Throughout the paper, the number of variables is denoted by $n$. A \emph{Boolean function} is a mapping $f:\{0,1\}^V\rightarrow\{0,1\}$. The \emph{characteristic vector} of a set $Z$ is denoted by $\chi_Z$, that is, $\chi_Z(v)=1$ if $v\in Z$ and $0$ otherwise. We say that a set $Z\subseteq V$ is a \emph{true set} of $f$ if $f(\chi_Z)=1$, and a \emph{false set} otherwise. 


For a subset $\emptyset\neq B\subseteq V$ and $v\in V\setminus B$ we write $B\rightarrow v$ to denote the pure Horn clause $C=v\vee \bigvee_{u\in B}\overline{u}$. Here $B$ and $v$ are called the \emph{body} and \emph{head} of the clause, respectively. That is, a pure Horn CNF can be associated with a directed hypergraph where every clause $B\rightarrow v$ is considered to be a directed hyperarc oriented from $B$ to $v$. The \emph{set of bodies} appearing in a CNF representation $\Phi$ is denoted by $\mathcal{B}_{\Phi}$. We will also use the notation $B\rightarrow H$ to denote $\bigwedge_{v\in H}B\rightarrow v$. By grouping the clauses with the same body, a pure Horn CNF $\Phi=\bigwedge_{B\in\mathcal{B}_\Phi}\bigwedge_{v\in H(B)}B\rightarrow v$  can be represented as $\bigwedge_{B\in\mathcal{B}_\Phi} B\rightarrow H(B)$. The latter representation is in a one-to-one correspondence with the adjacency list representation of the corresponding directed hypergraph.  

For any pure Horn function $h$ the family of its true sets 
is closed under taking intersection and contains $V$. This implies that for any non-empty set $Z\subseteq V$ there exists a unique minimal true set containing $Z$. This set is called the \emph{closure} of $Z$ and is denoted by $F_h(Z)$. 
If $\Phi$ is a pure Horn CNF representation of $h$, then the closure $F_h(Z)$ can be computed in polynomial time by the following \emph{forward chaining procedure}. Set $F^0_{\Phi}(Z):=Z$. In a general step, if $F^i_{\Phi}(Z)$ is a true set then we set $F_\Phi(Z)=F^i_{\Phi}(Z)$. Otherwise, let $A\subseteq V$ denote the set of all variables $v$ for which there exists a clause $B\rightarrow v$ of $\Phi$ with $B\subseteq F^i_{\Phi}(Z)$ and $v\notin F^i_{\Phi}(Z)$, and set $F^{i+1}_{\Phi}(Z):=F^i_{\Phi}(Z)\cup A$. The result $F_\Phi(Z)$ does not depend on the particular choice of the representation $\Phi$, but only on the underlying function $h$, that is, $F_\Phi(Z)=F_h(Z)$.

A pure Horn function $h$ is \emph{key Horn} if it has a CNF representation of the form $\bigwedge_{B\in\mathcal{B}} B\rightarrow (V\setminus B)$ for some $\mathcal{B}\subseteq 2^V\setminus\{V\}$. We shall refer to $h$ as $h_{\mathcal{B}}$. Note that the same set of functions is defined if we restrict $\mathcal{B}$ to be \emph{Sperner}, that is, for any distinct $B,B'\in\mathcal{B}$ we have $B\not\subset B'$ and $B'\not\subset B$.

Assume now that $\Phi$ is a pure Horn CNF of the form $\bigwedge_{i=1}^m B_i\rightarrow H_i$ where $B_i\neq B_j$ for $i\neq j$. Note that the number of clauses in the CNF is $c_\Phi=\sum_{i=1}^m |H_i|$. The size of the formula can be measured in different ways:

\begin{itemize}
\item\textbf{(B) number of bodies}: $|\Phi|_B:=m$, 
\item\textbf{(BA) body area}:  $|\Phi|_{BA}:=\sum_{i=1}^m |B_i|$,
\item\textbf{(TA) total area}:  $|\Phi|_{TA}:=\sum_{i=1}^m (|B_i|+|H_i|)$, 
\item\textbf{(C) number of clauses (i.e., hyperarcs)}: $|\Phi|_C:=c_\Phi$, 
\item\textbf{(BC) number of bodies and clauses}: $|\Phi|_{BC}:=m+c_\Phi=\sum_{i=1}^m(|H_i|+1)$,
\item\textbf{(L) number of literals}:  $|\Phi|_L:=\sum_{i=1}^m \big((|B_i|+1)\cdot|H_i|\big)$.
\end{itemize}

These measures come up naturally in connection with directed hypergraphs, implicational systems, and CNF representations. The Horn minimization problem is to find a representation that is equivalent to a given Horn formula and has minimum size with respect to $|\cdot|_*$ where $*$ denotes one of the aforementioned functions. 

\section{Lower bounds} \label{sec:lower}

The present section provides some simple reductions of the problem and lower bounds for the size of an optimal solution. 

For a family $\mathcal{B}\subseteq 2^V\setminus\{V\}$, we denote by $\cal{B}^\bot$ the family of minimal elements of $\cal B$. Recall that $h_\mathcal{B}$ denotes the function defined by 
\begin{equation}
\Psi_\mathcal{B}=\bigwedge_{B\in\mathcal{B}} B\rightarrow (V\setminus B).  \label{eq:psi}  
\end{equation}

\begin{lemma}\label{lem:origbod}
For any measure ($*$) and for any $\mathcal{B}\subseteq 2^V\setminus\{V\}$, there exists a $|\cdot|_*$-minimum representation of $h_\mathcal{B}$ that uses exactly the bodies in $\mathcal{B}^\bot$. 
\end{lemma}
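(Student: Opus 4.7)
The plan is to establish two facts: every body in $\mathcal{B}^\bot$ must appear in every CNF representation of $h_\mathcal{B}$, and any body outside $\mathcal{B}^\bot$ can be eliminated without increasing the size in any of the six measures. Combining the two produces a minimum representation whose body set is exactly $\mathcal{B}^\bot$.

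For the first fact, I would exploit minimality via the forward-chaining procedure. Fix $B \in \mathcal{B}^\bot$. For any $C \subsetneq B$, minimality of $B$ in $\mathcal{B}$ implies that no element of $\mathcal{B}$ is contained in $C$, so $F_{h_\mathcal{B}}(C)=C$. Now consider any representation $\Phi$ of $h_\mathcal{B}$ and apply forward chaining starting from $B$. Since $B$ is a false set and $B \neq V$, some clause $C \to v$ with $C \subseteq B$ and $v \notin B$ must fire at the first step. If $C \subsetneq B$, then $F_\Phi(C)$ would contain $v$, contradicting $F_\Phi(C)=F_{h_\mathcal{B}}(C)=C$. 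Hence $C=B$, so $B$ must occur as a body of $\Phi$.

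For the second fact, suppose $\Phi$ represents $h_\mathcal{B}$ and $D \in \mathcal{B}_\Phi \setminus \mathcal{B}^\bot$ has non-empty head in $\Phi$ (empty-head bodies can simply be dropped). A variant of the above argument shows $D$ must contain some $B \in \mathcal{B}^\bot$: otherwise no element of $\mathcal{B}$ is contained in $D$, which forces $F_{h_\mathcal{B}}(D)=D$ and contradicts the existence of a head variable $v \notin D$ for $D$ in $\Phi$. Fix any such $B$; by the first fact, $B$ is already a body of $\Phi$. I would then replace every clause $D \to v$ by $B \to v$, absorbing the new heads into the existing block $B \to H_\Phi(B)$ and merging duplicates. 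The replacement preserves $h_\mathcal{B}$ since each new clause $B \to v$ is implied by $h_\mathcal{B}$ (as $v \in V \setminus B$), and each removed clause is implied by its replacement (since $B \subseteq D$).

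The remaining task is to verify that the size does not increase in any measure. The measures (B), (BA), (TA), (C), and (BC) each decrease: the body $D$ disappears, its heads are folded into an already-present body $B$, and some heads may collapse with ones already in $H_\Phi(B)$. The delicate case worth flagging is (L), which weights each clause by $|B|+1$; here the crucial inequality is $|B|\le|D|$, guaranteeing that each replacement clause contributes no more literals than the clause it replaced. Iterating the procedure over all bodies in $\mathcal{B}_\Phi \setminus \mathcal{B}^\bot$ terminates with a representation whose body set is contained in $\mathcal{B}^\bot$, and by the first fact it must actually equal $\mathcal{B}^\bot$, yielding the claimed minimum representation.
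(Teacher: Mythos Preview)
Your proposal is correct and follows essentially the same approach as the paper: show that every body outside $\mathcal{B}^\bot$ can be replaced by a body from $\mathcal{B}^\bot$ contained in it without increasing any of the measures, and that every body in $\mathcal{B}^\bot$ must appear in any representation. Your version is slightly more explicit---you spell out the per-measure size check and prove the stronger statement that \emph{every} representation (not only a minimum one) must use all of $\mathcal{B}^\bot$---whereas the paper packages the first part as a tiebreaking/minimal-counterexample argument and handles the second part after first establishing $\mathcal{B}_\Phi\subseteq\mathcal{B}^\bot$; but the underlying ideas are the same.
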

\begin{proof}
Take a $|\cdot|_*$-minimum representation $\Phi$ for which $|\mathcal{B}_\Phi\setminus\mathcal{B}^\bot|$ is as small as possible. First we show $\mathcal{B}_\Phi\subseteq \mathcal{B}^\bot$. Assume that $B\in \mathcal{B}_\Phi\setminus\mathcal{B}^\bot$. As $B$ is a false set of $h_\mathcal{B}$, there must be a clause $B'\rightarrow v$ in $\Psi_\mathcal{B}$ that is falsified by $\chi_B$, implying that $B'\subseteq B$. Therefore there exists a $B''\in\mathcal{B}^\bot$ such that $B''\subseteq B'\subseteq B$. If we substitute every clause $B\rightarrow v$ of $\Phi$ by $B''\rightarrow v$, then we get another representation of $h_\mathcal{B}$ since $B''\rightarrow v$ is a clause of $\Psi_\mathcal{B}$. Meanwhile, the $|\cdot|_*$ size of the representation does not increase while $|\mathcal{B}_\Phi\setminus\mathcal{B}^\bot|$ decreases, contradicting the choice of $\Phi$. 

Next we prove $\mathcal{B}_\Phi\supseteq \mathcal{B}^\bot$. If there exists a $B\in\mathcal{B}^\bot\setminus\mathcal{B}_{\Phi}$, then $B$ is a true set of $\Phi$ while it is a false set of $h_\cB$, contradicting the fact that $\Phi$ is a representation of $h_\cB$.\qed
\end{proof}

Lemma~\ref{lem:origbod} has two implications. It suffices to consider Sperner hypergraphs defining key Horn functions as an input, and more importantly, it is enough to consider CNFs using bodies from the input Sperner hypergraph when searching for minimum representations. For non-key Horn functions, this is not the case. 

From now on we assume that $\mathcal{B}$ is a Sperner family. We also assume that 
\[
\bigcup_{B\in\mathcal{B}} B=V  ~~~\text{  and  }~~~ \bigcap_{B\in\mathcal{B}} B=\emptyset.
\]
Indeed, if a variable $v\in V\setminus \bigcup_{B\in\mathcal{B}} B$ is not covered by the bodies, then there must be a clause with head $v$ and body in $\mathcal{B}$ in any minimum representation of $h_\mathcal{B}$, and actually one such clause suffices. Furthermore, if $v\in \bigcap_{B\in\mathcal{B}} B$, then we can reduce the problem by deleting it. None of these reductions affects the approximability of the problem.

Recall that the size of the ground set is denoted by $|V|=n$, while $|\mathcal{B}|=m$. The size of an optimal solution with respect to measure function $|\cdot|_*$ is denoted by $OPT_*(\mathcal{B})$. Using these notations
Lemma~\ref{lem:origbod} has the following easy corollary:

\begin{corollary}
We have $OPT_{B}(\mathcal{B})=m$ and $OPT_{BA}(\mathcal{B})=\sum_{B\in\mathcal{B}}|B|$. 
Therefore the minimization problems (B) and (BA) are solvable in polynomial time.\qed
\end{corollary}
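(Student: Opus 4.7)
The plan is to invoke Lemma~\ref{lem:origbod} directly and exploit the standing assumption that $\mathcal{B}$ is Sperner. Since no body in a Sperner family is strictly contained in another, we have $\mathcal{B}^\bot=\mathcal{B}$. Hence, by Lemma~\ref{lem:origbod} applied with $*\in\{B,BA\}$, there is a minimum representation $\Phi$ of $h_\mathcal{B}$ whose set of bodies $\mathcal{B}_\Phi$ equals $\mathcal{B}$ exactly.

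From $\mathcal{B}_\Phi=\mathcal{B}$ both equalities follow immediately. For the measure (B), the number of distinct bodies of such a minimum $\Phi$ is $|\mathcal{B}|=m$, so $OPT_B(\mathcal{B})=m$. For (BA), the body area of such a minimum $\Phi$ equals $\sum_{B\in\mathcal{B}_\Phi}|B|=\sum_{B\in\mathcal{B}}|B|$, so $OPT_{BA}(\mathcal{B})=\sum_{B\in\mathcal{B}}|B|$. There is essentially nothing to optimize once the body set is forced.

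For the algorithmic claim, I would simply observe that the canonical CNF $\Psi_\mathcal{B}=\bigwedge_{B\in\mathcal{B}} B\to(V\setminus B)$ from \eqref{eq:psi} has body set $\mathcal{B}$ and so already achieves the values above for the two measures, and it can clearly be written down in polynomial time in the input size. Thus (B) and (BA) are in P.

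There is no serious obstacle here; the only thing worth double-checking is that the reduction to the Sperner case in the paragraph preceding the corollary is size-preserving with respect to both measures, which it is because the reduction only removes bodies that are not in $\mathcal{B}^\bot$ and Lemma~\ref{lem:origbod} guarantees that some minimum solution omits these anyway.
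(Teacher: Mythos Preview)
Your proposal is correct and matches the paper's intended reasoning: the paper states this as an immediate corollary of Lemma~\ref{lem:origbod} with no separate proof, and you have simply spelled out the obvious details (Sperner implies $\mathcal{B}^\bot=\mathcal{B}$, so any $|\cdot|_*$-minimum representation already has body set exactly $\mathcal{B}$, which fixes both $|\cdot|_B$ and $|\cdot|_{BA}$). Your observation that $\Psi_\mathcal{B}$ realizes these values and is computable in polynomial time is also the natural way to conclude.
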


For the remaining measures we prove the following simple lower bound.

\begin{lemma}\label{lem:lb}
$OPT_*(\mathcal{B})\geq m$ for all measures $*$, and $OPT_*(\mathcal{B})\geq n$ for $*\in\{TA,C,BC,L\}$.
Furthermore, $OPT_L(\cB)\geq \max\{n(\delta +1),2m\}$, where $\delta$ is the size of a smallest body in $\cB$. 
\end{lemma}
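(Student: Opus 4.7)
The plan is to derive the three claimed bounds in sequence, leveraging Lemma~\ref{lem:origbod} and the two standing assumptions $\bigcup_{B\in\mathcal{B}} B = V$ and $\bigcap_{B\in\mathcal{B}} B = \emptyset$. Throughout I fix a $|\cdot|_*$-minimum representation $\Phi = \bigwedge_{i=1}^{m'} B_i \to H_i$ with distinct $B_i$'s and nonempty $H_i$'s.

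First I would establish $OPT_*(\mathcal{B}) \geq m$ for every measure. Since $\mathcal{B}$ is Sperner, $\mathcal{B}^\bot = \mathcal{B}$, so by Lemma~\ref{lem:origbod} I may assume $\{B_1,\dots,B_{m'}\} = \mathcal{B}$, giving $m'=m$. The bound is immediate for (B); for (BA) it follows from $|B_i|\geq 1$; for (C) and (BC) from $|H_i|\geq 1$; for (TA) from $|B_i|+|H_i|\geq 2$; and for (L) from $(|B_i|+1)|H_i|\geq 2$. In fact this gives $\geq 2m$ for all the measures in $\{TA,BC,L\}$, which already covers the second half of the $2m$ bound in the last claim.

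Next, for $* \in \{TA, C, BC, L\}$ I would show the stronger bound $|\Phi|_C \geq n$, from which the other three follow since $|\Phi|_{TA}\geq |\Phi|_C$, $|\Phi|_{BC}\geq |\Phi|_C$, and $|\Phi|_L = \sum_i(|B_i|+1)|H_i| \geq \sum_i |H_i| = |\Phi|_C$. The key observation, and the one nontrivial step of the whole lemma, is that every variable $v\in V$ must appear as the head of at least one clause of $\Phi$: since $\bigcap_{B\in\mathcal{B}} B=\emptyset$ there exists some $B\in\mathcal{B}$ with $v\notin B$, and because $B\to v$ is a clause of $\Psi_\mathcal{B}$ (i.e.\ is satisfied by $h_\mathcal{B}$'s constraints), the closure $F_\Phi(B)$ must contain $v$, which by the forward chaining procedure forces a clause in $\Phi$ with head $v$. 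Summing over $v\in V$ gives $|\Phi|_C = \sum_i|H_i| \geq n$.

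Finally, for the refined literal bound, I combine the $|\Phi|_C\geq n$ estimate with the size lower bound on bodies: using $|B_i|\geq \delta$,
\[
|\Phi|_L \;=\; \sum_{i=1}^m (|B_i|+1)\,|H_i| \;\geq\; (\delta+1)\sum_{i=1}^m |H_i| \;=\; (\delta+1)\,|\Phi|_C \;\geq\; (\delta+1)n,
\]
while $|\Phi|_L \geq 2m$ was already noted in the first paragraph. Taking the maximum yields $OPT_L(\cB)\geq \max\{n(\delta+1),2m\}$. The only place genuine content is used is the head-appearance argument via forward chaining; everything else is definitional bookkeeping.
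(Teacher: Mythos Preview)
Your proof is correct and follows essentially the same approach as the paper's: the paper also reduces the $\geq m$ bound to $OPT_B(\mathcal{B})=m$, reduces the $\geq n$ bound for the four measures to the single case $OPT_C(\mathcal{B})\geq n$ via the same head-appearance argument using forward chaining, and derives the refined literal bounds from the minimum body size $\delta$ and the fact that each body appears at least once with a clause of size $\geq 2$. The only cosmetic difference is that the paper appeals to the inequalities $|\cdot|_B\leq|\cdot|_*$ and $|\cdot|_C\leq|\cdot|_{TA},|\cdot|_{BC},|\cdot|_L$ directly, whereas you first fix a minimum representation with body set $\mathcal{B}$ via Lemma~\ref{lem:origbod} and then bound each measure by explicit term-by-term estimates.
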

\begin{proof}
By definition, $|\cdot |_B$ is a lower bound for all the other measures, implying $OPT_*(\mathcal{B})\geq OPT_B(\cB)=m$.

To see the second part, observe that $|\cdot |_C$ is a lower bound for the three other measures. Therefore it suffices to prove $OPT_C(\cB)\geq n$. By the assumption that for every $v\in V$ there exists a $B\in\cB$ not containing $v$, we can conclude by the fact that the closure $F_{h_\cB}(B)=V$ and by the way the forward chaining procedure works that every CNF representation of $h_\cB$ must contain at least one clause with $v$ as its head. This implies $OPT_C(\mathcal{B})\geq n$.  

To see the last part note that every variable $v\in V$ is the head of at least one clause, the body of which is of at least size $\delta\geq 1$. Furthermore, since every body appears at least once and all clauses are of size at least $2$, the claim follows. 
\qed
\end{proof}

For a pair $S,T\subseteq V$ of sets, let $\price_*(S,T)$ denote the minimum $|\cdot|_*$-size of a CNF $\Phi$ for which $\cB_\Phi\subseteq \cB$ and $T\subseteq F_\Phi(S)$, that is,
\begin{equation}
\price_*(S,T)=\min_\Phi\big\{|\Phi|_*\mid\mathcal{B}_\Phi\subseteq\mathcal{B}, T\subseteq F_\Phi(S)\big\}. \label{eq:price}
\end{equation}

The following lemma plays a key role in our approximability proofs.

\begin{lemma}\label{lem:main}
Let $\mathcal{B}=\mathcal{B}_1\cup\dots\cup\mathcal{B}_q$ be a partition of $\mathcal{B}$ and let $B_i\in\mathcal{B}_i$ for $i=1,\ldots,q$. Then 
\begin{equation}
OPT_*(\mathcal{B})\geq \sum_{i=1}^q\min\{\price_*(B_i,B)\mid B\in\cB\setminus\cB_i\}
\end{equation}
for all six measures $*$.
\end{lemma}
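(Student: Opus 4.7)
The plan is to take an optimal $|\cdot|_*$-representation $\Phi$ of $h_\cB$ and carve it into pairwise disjoint sub-CNFs $\Phi_1,\ldots,\Phi_q$ with the property that each $\Phi_i$ uses only bodies from $\cB_i$ and derives, via forward chaining from $B_i$, at least one body $B^*\in\cB\setminus\cB_i$. Pairwise disjointness, together with the fact that the body sets $\cB_{\Phi_i}$ will also be pairwise disjoint, implies $\sum_i |\Phi_i|_* \leq |\Phi|_* = OPT_*(\cB)$ for each of the six measures; meanwhile the defining property of $\price_*$ gives $|\Phi_i|_* \geq \min\{\price_*(B_i,B) \mid B\in\cB\setminus\cB_i\}$, and chaining these estimates finishes the proof.

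For the construction I would first invoke Lemma~\ref{lem:origbod} together with the Sperner assumption to take $\cB_\Phi = \cB$. For each $i$, run forward chaining on $B_i$ in $\Phi$, producing $F^0_\Phi(B_i)=B_i \subseteq F^1_\Phi(B_i)\subseteq\cdots$. Since $B_i$ is a false set of $h_\cB$ we have $F_\Phi(B_i)=V$, so (assuming $q\geq 2$; otherwise the right-hand side is vacuous) some body of $\cB\setminus\cB_i$ eventually becomes contained in the current closure. Let $t\geq 1$ be the smallest index for which $F^t_\Phi(B_i)$ contains such a body $B^*$, and for each $v\in F^t_\Phi(B_i)\setminus B_i$ pick a witness clause $C_v = B(v)\to v$ of $\Phi$ with $B(v)\subseteq F^{r(v)-1}_\Phi(B_i)$, where $r(v)$ is the step at which $v$ entered the closure. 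Set $\Phi_i := \{C_v \mid v\in F^t_\Phi(B_i)\setminus B_i\}$.

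The two required properties then fall out. A short induction on $r(v)$ gives $F^t_\Phi(B_i) \subseteq F_{\Phi_i}(B_i)$, so in particular $B^*\subseteq F_{\Phi_i}(B_i)$. For the body containment, the minimality of $t$ ensures that $F^{r(v)-1}_\Phi(B_i)$ contains no body of $\cB\setminus\cB_i$ whenever $r(v)\leq t$; combined with $B(v)\in\cB$ and $B(v)\subseteq F^{r(v)-1}_\Phi(B_i)$, this forces $B(v)\in\cB_i$, giving $\cB_{\Phi_i}\subseteq\cB_i$. Since the $\cB_i$'s partition $\cB$, distinct $\Phi_i$'s share no clause and their body sets are pairwise disjoint, which is precisely what is needed for (B), (BA), (TA), (C), (BC) and (L) to behave additively (or sub-additively) under this decomposition of $\Phi$.

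The main obstacle is identifying the correct stopping time: $t$ must be late enough that some body of $\cB\setminus\cB_i$ fits inside $F^t_\Phi(B_i)$ (to make $\Phi_i$ an eligible candidate in the $\price_*$ minimum), yet early enough that no outside body has been invoked by the clauses collected into $\Phi_i$ (to keep $\cB_{\Phi_i}\subseteq \cB_i$ and preserve disjointness across the $\Phi_i$'s). Choosing $t$ as the first step at which an outside body is contained in the closure hits both requirements simultaneously, and the Sperner property guarantees $t\geq 1$ so that this first step is well-defined.
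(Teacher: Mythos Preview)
Your proposal is correct and follows essentially the same approach as the paper: take an optimal representation using only bodies from $\cB$, and for each part $\cB_i$ observe that forward chaining from $B_i$ must eventually reach a body outside $\cB_i$, with all clauses invoked up to that moment having bodies in $\cB_i$. The paper phrases this as bounding the ``contribution of the clauses with bodies in $\cB_i$'' rather than explicitly extracting witness clauses into sub-CNFs $\Phi_i$, but the content is identical; your write-up is simply more explicit about the stopping time and the disjointness needed for the six measures to behave sub-additively.
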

\begin{proof}
Take a minimum representation $\Phi$ with respect to $|\cdot|_*$ which uses bodies only from $\mathcal{B}$. Such a representation exists by Lemma~\ref{lem:origbod}. We claim that the contribution of the clauses with bodies in $\mathcal{B}_i$ to the total size of $\Phi$ is at least $\min\{\price_*(B_i,B)\mid B\in\cB\setminus\cB_i\}$ for each $i=1,\ldots,q$. This would prove the lemma as the $\mathcal{B}_i$'s form a partition of $\mathcal{B}$.

To see the claim, take an index $i\in\{1,\ldots,q\}$ and let $B'$ be the first body (more precisely, one of the first bodies) not contained in $\mathcal{B}_i$ that is reached by the forward chaining procedure from $B_i$ with respect to $\Phi$. Every clause that is used to reach $B'$ from $B_i$ has its body in $\mathcal{B}_i$ and their contribution to the size of the representation is lower bounded by $\price_*(B_i,B')$, thus concluding the proof. \qed
\end{proof}

\section{Approximability results for (TA), (C), (BC), and (L)}\label{sec:approx}

Given a Sperner family $\mathcal{B}\subseteq 2^V\setminus\{V\}$, we can associate with it a complete directed graph $D_{\mathcal{B}}$ by defining $V(D_{\mathcal{B}})=\mathcal{B}$ and $E(D_{\mathcal{B}})=\mathcal{B}\times\mathcal{B}$. We refer to $D_{\mathcal{B}}$ as the \emph{body graph} of $\mathcal{B}$. 

For any subset $E'\subseteq E(D_{\mathcal{B}})$, define
\begin{equation}
\Phi_{E'}=\bigwedge_{(B,B')\in E'} B\rightarrow(B'\setminus B). \label{eq:sc}
\end{equation}
Note that if $E'\subseteq E(D_{\mathcal{B}})$ forms a strongly connected spanning subgraph of $D_{\mathcal{B}}$, then $\Phi_{E'}$ is a representation of $h_\mathcal{B}$. Let us add that not all representations arise this way, in particular, minimum representations might have significantly smaller size.

\begin{lemma} \label{lem:kapprox}
If $E'$ is a Hamiltonian cycle in $D_\cB$, then $\Phi_{E'}$ defined in \eqref{eq:sc} provides a $k$-ap\-prox\-i\-ma\-tion for all measures, where $k$ is an upper bound on the sizes of bodies in $\cB$.
\end{lemma}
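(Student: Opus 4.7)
The plan is to exploit two structural features of a Hamiltonian cycle $E'\subseteq E(D_\cB)$: every body $B\in\cB$ is the source of exactly one edge of $E'$ and the target of exactly one edge, and $|B'\setminus B|\le|B'|\le k$ for each edge $(B,B')\in E'$. Using these, I will bound $|\Phi_{E'}|_*$ for each of the six measures by summing contributions along the cycle, and then compare with the lower bounds on $OPT_*(\cB)$ furnished by Lemmas~\ref{lem:origbod} and~\ref{lem:lb}.

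The first four measures fall out directly. For (B) and (BA) the bounds are tight: $|\Phi_{E'}|_B=m=OPT_B(\cB)$ and $|\Phi_{E'}|_{BA}=\sum_{B\in\cB}|B|=OPT_{BA}(\cB)$. For (C), $|\Phi_{E'}|_C=\sum_{(B,B')\in E'}|B'\setminus B|\le km$ combined with $OPT_C(\cB)\ge m$ from Lemma~\ref{lem:lb} yields ratio at most $k$. For (BC), $|\Phi_{E'}|_{BC}=m+|\Phi_{E'}|_C\le (k+1)m$, while the elementary inequality $\min_\Phi(f(\Phi)+g(\Phi))\ge\min_\Phi f(\Phi)+\min_\Phi g(\Phi)$ applied to $f=|\cdot|_B,\,g=|\cdot|_C$ gives $OPT_{BC}(\cB)\ge OPT_B(\cB)+OPT_C(\cB)\ge 2m$, so the ratio is at most $(k+1)/2\le k$ for every $k\ge 1$.

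The cases (TA) and (L) rely on the sharper lower bound $OPT_{TA}(\cB),\,OPT_L(\cB)\ge \sum_{B\in\cB}|B|+m$, which follows from Lemma~\ref{lem:origbod}: a $|\cdot|_*$-optimal representation may be assumed to have body set exactly $\cB$, so each $B\in\cB$ is the body of at least one clause, contributing $|B|+1$ to both $|\cdot|_{TA}$ and $|\cdot|_L$. For (TA), $|\Phi_{E'}|_{TA}=\sum_{(B,B')\in E'}(|B|+|B'\setminus B|)\le 2\sum_{B\in\cB}|B|$, and the desired bound $2\sum|B|\le k(\sum|B|+m)$ reduces to $(2-k)\sum|B|\le km$, which is trivial for $k\ge 2$ and holds for $k=1$ since in that case $\sum|B|=m$. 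For (L), summing literal lengths along the cycle gives
\[
|\Phi_{E'}|_L=\sum_{(B,B')\in E'}(|B|+1)\,|B'\setminus B|\le k\sum_{(B,B')\in E'}(|B|+1)=k\Bigl(\sum_{B\in\cB}|B|+m\Bigr)\le k\cdot OPT_L(\cB).
\]
The main delicate point is this strengthened lower bound, which is not among the inequalities of Lemma~\ref{lem:lb} and depends on Lemma~\ref{lem:origbod} to force all $m$ bodies of $\cB$ into the optimum; the remaining bounds are all direct counting along the Hamiltonian cycle.
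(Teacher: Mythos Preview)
Your proof is correct and follows essentially the same approach as the paper: invoke Lemma~\ref{lem:origbod} to force any optimal representation to use every body in $\cB$, then combine the resulting lower bounds with the upper bound $|B'\setminus B|\le k$ along the Hamiltonian cycle. The paper compresses all six cases into a single sentence (``Since $|B'\setminus B|$ is at most $k$ for all arcs $(B,B')\in E'$, the statement follows''), whereas you spell out each measure separately and make explicit the strengthened lower bound $OPT_{TA}(\cB),\,OPT_L(\cB)\ge \sum_{B\in\cB}(|B|+1)$ that the paper leaves implicit; the substance is the same.
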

\begin{proof}
By Lemma~\ref{lem:origbod}, there exists a minimum representation $\Phi$ of $h_\cB$ such that $\cB_\Phi=\cB$. Since $|B'\setminus B|$ is at most $k$ for all arcs $(B,B')\in E'$, the statement follows. \qed
\end{proof}

In fact, for (B) and (BA) \eqref{eq:sc} gives an optimal representation for any strongly connected spanning $E'$. Furthermore, if $E'$ is a Hamiltonian cycle, we get a $2$-approximation for (TA) based on the fact that the total area of any representation is lower bounded by $\sum_{B\in\mathcal{B}} |B|$.

\begin{theorem} \label{thm:tamin}
If $E'$ is a Hamiltonian cycle in $D_\cB$, then $\Phi_{E'}$ defined in \eqref{eq:sc} provides a $2$-ap\-prox\-i\-ma\-tion for (TA).
\end{theorem}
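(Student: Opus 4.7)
The plan is to bound $|\Phi_{E'}|_{TA}$ from above by $2\sum_{B\in\cB}|B|$ and to show that $OPT_{TA}(\cB)\geq \sum_{B\in\cB}|B|$, from which the 2-approximation ratio follows immediately.

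For the upper bound, I would unfold the definition: by~\eqref{eq:sc},
\[
|\Phi_{E'}|_{TA}=\sum_{(B,B')\in E'}\bigl(|B|+|B'\setminus B|\bigr).
\]
Since $E'$ is a Hamiltonian cycle on the vertex set $\cB$, each $B\in\cB$ occurs exactly once as the tail and exactly once as the head of some arc of $E'$. Hence $\sum_{(B,B')\in E'}|B|=\sum_{B\in\cB}|B|$ and, using $|B'\setminus B|\leq|B'|$, also $\sum_{(B,B')\in E'}|B'\setminus B|\leq\sum_{B\in\cB}|B|$. Summing gives $|\Phi_{E'}|_{TA}\leq 2\sum_{B\in\cB}|B|$.

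For the lower bound, I would invoke Lemma~\ref{lem:origbod}: since $\cB$ is Sperner we have $\cB^\bot=\cB$, so there is a $|\cdot|_{TA}$-minimum representation $\Phi^*=\bigwedge_{i=1}^{m}B_i\rightarrow H_i$ whose set of bodies is exactly $\cB$ (with distinct $B_i$'s). The total area of $\Phi^*$ is $\sum_{i=1}^{m}(|B_i|+|H_i|)\geq\sum_{B\in\cB}|B|$, giving $OPT_{TA}(\cB)\geq\sum_{B\in\cB}|B|$. Combining the two bounds yields $|\Phi_{E'}|_{TA}\leq 2\cdot OPT_{TA}(\cB)$, as required.

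There is no real obstacle here; the proof is a short double-counting argument once Lemma~\ref{lem:origbod} supplies the structural fact that every body of $\cB$ must appear in some optimal representation. The only point worth being careful about is the head-side estimate $|B'\setminus B|\leq|B'|$, which is what prevents us from obtaining anything better than a factor of $2$ through this construction (a Hamiltonian cycle may force the heads $B'\setminus B$ to be nearly as large as $B'$ itself in the worst case).
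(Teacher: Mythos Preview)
Your proof is correct and follows essentially the same approach as the paper: bound $|\Phi_{E'}|_{TA}=\sum_{i=1}^{m}(|B_i|+|B_{i+1}\setminus B_i|)\le 2\sum_{i=1}^{m}|B_i|$ and combine with the lower bound $OPT_{TA}(\cB)\ge\sum_{B\in\cB}|B|$ coming from Lemma~\ref{lem:origbod}. You are simply a bit more explicit about why each body is counted once on each side and about how the lower bound follows from the lemma.
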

\begin{proof}
The size of $\Phi_{E'}$ is $|\Phi_{E'}|_{TA}=\sum_{i=1}^m (|B_i|+|B_{i+1}\setminus B_i|)\leq 2 \sum_{i=1}^m |B_i|\leq 2OPT_{TA}(\mathcal{B})$. \qed
\end{proof}

The observation that a strongly connected subgraph of the body graph corresponds to a representation of $h_\cB$, as in \eqref{eq:sc}, suggests the reduction of our problem to the problem of finding a minimum weight strongly connected spanning subgraph in a directed graph with arc-weight $\price_*(B,B')$ for $(B,B')\in E(D_\cB)$. The optimum solution to this problem (MWSCS) is an upper bound for the minimum $|\cdot|_*$-size of a representation of $h_\cB$. As there are efficient constant-factor approximations for MWSCS \cite{jaja}, this approach may look promising. There are however two difficulties: for measure (L), no polynomial time algorithm is known for computing $\price_L$; even when it is efficiently computable (for measures (C) and (BC)), the upper bound obtained in this way may be off by a factor of $\Omega(n)$ from the optimum (see Section~\ref{sec:ex} for a construction).

In what follows, we overcome these difficulties. For (C), instead of a strongly connected spanning subgraph, we compute a minimum weight spanning in-ar\-bores\-cence and extend that to a representation of $h_\cB$. The same approach works for (BC) as well.
For (L), the situation is more complicated. First, we develop an efficient approximation algorithm for $\price_L$.  
Next, we compute a minimum weight spanning in-ar\-bores\-cence where its root is pre-specified. Finally, we extend the corresponding CNF to a representation of $h_\cB$. 
We show that the cost of the arborescences built is at most a multiple of the optimum by a logarithmic factor, which in turn ensures the improved approximation factor.  

\subsection{Clause and body-clause minimum representations} \label{sec:emin}

In this section we consider (C) and (BC) and show that the simple algorithm described in Procedure 1 provides the stated approximation factor. We note that a minimum weight spanning in-arborescence of a directed graph can be found in polynomial time, see \cite{CL65,E67}.

\begin{algorithm2e}
\caption{Approximation of (C) and (BC)} \label{proc:base}

\SetAlgoLined

 Determine a minimum $\price_C$-weight spanning in-arborescence $T$ of $D_\cB$. \linebreak /$*$ Denote by $B_0$ the body corresponding to the root of $T$. $*$/
 
 Output $\Phi=\Phi_T\wedge B_0\rightarrow(V\setminus B_0)$. \linebreak /$*$ Here $\Phi_T$ is defined as in \eqref{eq:sc}. $*$/
\end{algorithm2e}

First we observe that $\price_C$ is easy to compute.

\begin{lemma} \label{lem:triangle}
$\price_C(B,B')=|B'\setminus B|$ for $B,B'\in\cB$.
\end{lemma}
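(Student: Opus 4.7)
The plan is to prove the equality by establishing matching upper and lower bounds, each following directly from the definitions.

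For the upper bound, I would exhibit an explicit CNF achieving cost $|B' \setminus B|$. Namely, take $\Phi^* = \bigwedge_{v \in B' \setminus B} B \rightarrow v$. This uses only the single body $B \in \cB$, so $\cB_{\Phi^*} \subseteq \cB$. Moreover, each clause $B \rightarrow v$ with $v \notin B$ is a clause of $\Psi_\cB$ (so $\Phi^*$ only uses valid implications of $h_\cB$, although technically for the definition of $\price_C$ we only need that forward chaining from $B$ produces $B'$). Starting forward chaining at $F^0_{\Phi^*}(B) = B$, every variable $v \in B' \setminus B$ is added in the first step, so $B' \subseteq B \cup (B' \setminus B) = F_{\Phi^*}(B)$. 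Thus $\price_C(B, B') \leq |\Phi^*|_C = |B' \setminus B|$.

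For the lower bound, let $\Phi$ be any CNF with $\cB_\Phi \subseteq \cB$ and $B' \subseteq F_\Phi(B)$. Each variable $v \in B' \setminus B$ satisfies $v \in F_\Phi(B)$ and $v \notin F^0_\Phi(B) = B$, so by the forward chaining procedure there must be some index $i$ at which $v$ is newly added to $F^{i+1}_\Phi(B)$; this requires the existence of a clause in $\Phi$ whose head is $v$. Since each clause has exactly one head, the clauses witnessing different elements of $B' \setminus B$ are pairwise distinct, yielding $|\Phi|_C \geq |B' \setminus B|$. Taking the minimum over $\Phi$ gives $\price_C(B, B') \geq |B' \setminus B|$, and combining with the upper bound finishes the proof.

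I do not expect any real obstacle here: the statement is essentially the observation that the cheapest way (in the clause-count measure) to force $B'$ into the closure of $B$ is to directly write down one clause per missing variable, and no cheaper option exists because each missing variable demands its own head-occurrence.
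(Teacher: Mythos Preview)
Your proof is correct and follows essentially the same approach as the paper's own proof: both establish the upper bound via the explicit CNF $B\rightarrow(B'\setminus B)$ and the lower bound by observing that each variable in $B'\setminus B$ must appear as the head of some clause in any feasible $\Phi$. Your write-up is slightly more detailed but the argument is identical.
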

\begin{proof}
Take a pure Horn CNF $\Phi$ attaining the minimum in \eqref{eq:price}. As every variable in $B'\setminus B$ is reached by the forward chaining procedure from $B$ with respect to $\Phi$, each such variable must be a head of at least one clause in $\Phi$. That is, $\Phi$ contains at least $|B'\setminus B|$ clauses. On the other hand, $B\rightarrow(B'\setminus B)$ uses exactly $|B'\setminus B|$ clauses, hence $\price_C(B,B')=|B'\setminus B|$ as stated. \qed
\end{proof}


\begin{lemma} \label{lem:minarb}
Let $T$ denote a minimum $\price_C$-weight spanning in-arborescence in $D_\cB$. Then $$|\Phi_T|_C\leq\lceil\log k\rceil OPT_C(\cB)+\max\{0,m-k\},$$ where $k$ is an upper bound on the sizes of bodies in $\cB$.
\end{lemma}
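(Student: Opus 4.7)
By Lemma~\ref{lem:triangle} we have $|\Phi_T|_C=\sum_{(B,B')\in T}|B'\setminus B|$, which is exactly the $\price_C$-weight of $T$ in $D_\cB$. Since $T$ is a minimum-weight spanning in-arborescence, it suffices to exhibit any arc set $E^\ast\subseteq E(D_\cB)$ that contains some spanning in-arborescence and whose total $\price_C$-weight is at most $\lceil\log k\rceil\,OPT_C(\cB)+\max\{0,m-k\}$; the weight of $T$ is then bounded by that of any spanning in-arborescence sitting inside $E^\ast$.

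I would build $E^\ast$ by a Bor\r{u}vka-style iteration. Initialize the partition of $\cB$ into $s_0=m$ singleton clusters. In phase $j\in\{1,\dots,\lceil\log k\rceil\}$, pick one representative $B_i^{(j)}$ from each current cluster $\cB_i^{(j)}$ and add to $E^\ast$ the arc from $B_i^{(j)}$ to a body $B^\ast\in\cB\setminus\cB_i^{(j)}$ minimizing $|B^\ast\setminus B_i^{(j)}|$. By Lemma~\ref{lem:main} applied to the current partition and these representatives, the arcs added in phase $j$ contribute at most $OPT_C(\cB)$ to the $\price_C$-weight of $E^\ast$. Since every added arc crosses between distinct clusters, these arcs induce a self-loop-free functional digraph on the clusters; each of its weakly connected components is a single directed cycle of length at least two with in-trees pointing into the cycle, so merging each component into a new super-cluster at least halves the cluster count: $s_j\le s_{j-1}/2$.

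After $\lceil\log k\rceil$ phases, the accumulated weight is at most $\lceil\log k\rceil\,OPT_C(\cB)$ and the remaining number of clusters satisfies $s\le m/2^{\lceil\log k\rceil}\le m/k$. If $s>1$, finish by adding $s-1$ arbitrary arcs that connect the clusters into a super-in-arborescence. Each such arc has $\price_C$-weight at most $k$ (every body has size at most $k$), contributing at most $(s-1)k\le m-k$ to the total. Hence $w(E^\ast)\le\lceil\log k\rceil\,OPT_C(\cB)+\max\{0,m-k\}$. A routine induction on the phase index shows that every body has a directed $E^\ast$-path to a common sink, namely any fixed vertex on the cycle formed in the final phase, so $E^\ast$ indeed contains a spanning in-arborescence of $D_\cB$, yielding $|\Phi_T|_C\le w(E^\ast)\le\lceil\log k\rceil\,OPT_C(\cB)+\max\{0,m-k\}$.

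The two points that need careful checking are (i) the halving inequality $s_j\le s_{j-1}/2$, which reduces to the elementary fact that a loop-free functional digraph on $s$ vertices has at most $s/2$ weakly connected components (each must contain a cycle of length at least two), and (ii) the reachability claim guaranteeing that $E^\ast$ genuinely contains a spanning in-arborescence rather than just a weighted arc set, which follows inductively: within each phase-$j$ super-cluster every body can reach the cycle formed in that phase, and the arcs added in phase $j{+}1$ carry that cycle into the next super-cluster's cycle. The rest is a combination of Lemma~\ref{lem:main} with the arithmetic $2^{\lceil\log k\rceil}\ge k$.
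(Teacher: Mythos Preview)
Your approach is the same Bor\r{u}vka-style construction as the paper's, and the weight bookkeeping (Lemma~\ref{lem:main} per phase, $(s-1)k\le m-k$ at the end) is correct. The gap is in claim~(ii). You let the phase-$(j{+}1)$ representative be an \emph{arbitrary} body in its cluster and then assert that ``the arcs added in phase $j{+}1$ carry that cycle into the next super-cluster's cycle''. But the only phase-$(j{+}1)$ arc leaving the cluster emanates from the representative, and bodies on the phase-$j$ cycle can reach the representative only if the representative itself lies on that cycle. Concretely, if a phase-$1$ super-cluster consists of a $3$-cycle $A\to B\to C\to A$ together with an off-cycle body $D$ whose phase-$1$ arc is $D\to A$, and you choose $D$ as the phase-$2$ representative, then $A,B,C$ have no $E^\ast$-path out of the cluster at all, so $E^\ast$ contains no spanning in-arborescence. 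The same problem recurs in your final ``$s-1$ arbitrary arcs'' step.

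The fix is easy and is exactly what the paper does: maintain an actual branching $T_i$ by deleting one arc from each cycle created, so that every component is an in-arborescence whose root is reachable from all bodies in it, and always take that root as the next representative. Equivalently, in your formulation, always choose the representative on the previous phase's cycle. With that choice your induction goes through and the rest of the argument is complete.
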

\begin{proof}
We construct a subgraph $T$ of $D_\cB$ such that (i) it is a spanning in-arborescence, and (ii) $|\Phi_T|_C\leq\lceil\log k\rceil OPT_C(\cB)+\max\{0,m-k\}$. We start with the digraph $T_1$ on node set $\cB$ that has no arcs. In a general step of the algorithm, $T_i$ will denote the graph constructed so far. We maintain the property that $T_i$ is a branching, that is, a collection of node-disjoint in-arborescences spanning all nodes. In an iteration, for each such in-arborescence we choose an arc of minimum weight with respect to $\price_C$ that goes from the root of the in-arborescence to some other component. We add these arcs to $T_i$, and for each directed cycle created, we delete one of its arcs. This results in a graph $T_{i+1}$ with at most half the number of weakly connected components that $T_i$ has, all being in-arborescences. We repeat this until the number of components becomes at most $\max\{1,m/k\}$. To reach this, we need at most $\lceil\log k\rceil$ iterations. Finally, we choose one of the roots of the components and add an arc from all the other roots to this one, obtaining a spanning in-arborescence $T$. 

It remains to show that $T$ also satisfies (ii). In the final stage, we add at most $\max\{1,m/k\}-1$ arcs to $T$, which corresponds to at most $k(\max\{1,m/k\}-1)\leq \max\{0,m-k\}$ clauses in $\Phi_T$. Now we bound the rest of $\Phi_T$. In iteration $i$, components of $T_i$ define a partition $\mathcal{B}=\mathcal{B}_1\cup\dots\cup\mathcal{B}_q$. Let us denote by $B_j$ the body corresponding to the root of the arborescence with node-set $\cB_j$. Let us consider the arcs $\{(B_j,B'_j)\mid j=1,\dots,q\}$ chosen to be added in the $i$th iteration. Now we obtain 
\begin{equation*}
|\Phi_{T_{i+1}\setminus T_i}|_C\leq \displaystyle\sum_{j=1}^q\price_C(B_j,B'_j)=
\displaystyle\sum_{j=1}^q\min_{B\in\cB\setminus\cB_j}\price_C(B_j,B)\leq OPT_C(\cB).
\end{equation*}
The first inequality follows from the construction of $T$. The equality follows from the criterion to choose the arcs to be added. The last inequality follows from Lemma~\ref{lem:main}. Since we have at most $\lceil\log k\rceil$ iterations, the lemma follows. \qed
\end{proof}

\begin{theorem} \label{thm:emin}
For key Horn functions, there exists a polynomial time \linebreak $\min\{\lceil\log n\rceil + 1,\lceil\log k\rceil + 2,k\}$-approximation algorithm for (C) and (BC), where $k$ is an upper bound on the sizes of bodies in $\cB$. 
\end{theorem}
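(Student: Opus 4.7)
My plan is to establish the three bounds in the minimum separately, and have the algorithm return the shorter of a Hamiltonian-cycle representation (Lemma~\ref{lem:kapprox}) and the output of Procedure~\ref{proc:base}. The $k$-approximation is immediate from Lemma~\ref{lem:kapprox}. For the two logarithmic bounds I will analyze Procedure~\ref{proc:base}, whose output $\Phi = \Phi_T \wedge B_0 \to (V \setminus B_0)$ satisfies $|\Phi|_C = |\Phi_T|_C + (n - |B_0|)$, so it suffices to bound $|\Phi_T|_C$ by invoking Lemma~\ref{lem:minarb} with two different choices of the body-size upper bound.

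For the $\lceil\log k\rceil + 2$ bound I would use the given upper bound $k$ in Lemma~\ref{lem:minarb}, obtaining $|\Phi_T|_C \leq \lceil\log k\rceil\,OPT_C(\cB) + \max\{0, m-k\}$; combined with $\max\{m,n\} \leq OPT_C(\cB)$ from Lemma~\ref{lem:lb} and $n - |B_0| \leq n$, this immediately yields $|\Phi|_C \leq (\lceil\log k\rceil + 2)\,OPT_C(\cB)$. For the complementary $\lceil\log n\rceil + 1$ bound I would exploit the fact that $n$ is itself a trivially valid upper bound on body sizes, and apply Lemma~\ref{lem:minarb} with $k := n$ to obtain $|\Phi_T|_C \leq \lceil\log n\rceil\,OPT_C(\cB) + \max\{0, m-n\}$. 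The extra $n - |B_0|$ contributed by the $B_0 \to V \setminus B_0$ step then telescopes with the $-n$:
\[
\max\{0,\, m-n\} + (n - |B_0|) \;=\; \max\{m,n\} - |B_0| \;\leq\; OPT_C(\cB)
\]
by Lemma~\ref{lem:lb}, so the two residual additive terms collapse into a single $OPT_C(\cB)$, giving $|\Phi|_C \leq (\lceil\log n\rceil + 1)\,OPT_C(\cB)$. This telescoping, which saves an additive $+1$ over the naive combination, is the main (and essentially only) delicate point of the argument.

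Finally, I would transfer all three bounds to (BC) at no cost. By Lemma~\ref{lem:origbod} every (minimum) representation of $h_\cB$ uses exactly the bodies in $\cB$, hence $OPT_{BC}(\cB) = m + OPT_C(\cB)$, while both the Hamiltonian-cycle construction and Procedure~\ref{proc:base} produce a $\Phi$ whose body set is precisely $\cB$, so that $|\Phi|_{BC} = m + |\Phi|_C$. For any $\alpha \geq 1$ with $|\Phi|_C \leq \alpha\,OPT_C(\cB)$, this gives $|\Phi|_{BC} \leq m + \alpha\,OPT_C(\cB) \leq \alpha\bigl(m + OPT_C(\cB)\bigr) = \alpha\,OPT_{BC}(\cB)$, so the three approximation ratios transfer verbatim.
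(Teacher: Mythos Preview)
Your proposal is correct and matches the paper's own proof almost exactly: the paper likewise invokes Lemma~\ref{lem:kapprox} for the $k$-bound, analyzes Procedure~\ref{proc:base} via Lemma~\ref{lem:minarb} to get $|\Phi|_C \leq \lceil\log k\rceil\,OPT_C(\cB) + \max\{0,m-k\} + n$, reads off the $\lceil\log k\rceil+2$ bound from Lemma~\ref{lem:lb}, substitutes $k=n$ and uses the same telescoping $\max\{0,m-n\}+n=\max\{m,n\}\leq OPT_C(\cB)$ for the $\lceil\log n\rceil+1$ bound, and transfers to (BC) via $OPT_{BC}=|\cB|+OPT_C$ and $|\Phi|_{BC}=|\cB|+|\Phi|_C$. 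Your write-up simply makes the telescoping step and the (BC) transfer a bit more explicit.
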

\begin{proof}
We first show that $\Phi$ provided by Procedure~\ref{proc:base} is a $\min\{\lceil\log n\rceil + 1,\lceil\log k\rceil\allowbreak+2\}$-approximation for (C) and (BC). Note that $\Phi$ is a subformula of $\Psi_\cB$ defined by \eqref{eq:psi} since all bodies in $\Phi$ are from $\cB$. Furthermore, by our construction, $F_\Phi(B)=V$ for all $B\in\cB$. This implies that the output $\Phi$ represents $h_\cB$. Using Lemma~\ref{lem:minarb} and the fact that we added $|V\setminus B_0|\leq n$ clauses to $\Phi_T$ in Step 2, we obtain 
\begin{equation*}
|\Phi|_C\leq\lceil\log k\rceil OPT_C(\cB)+\max\{0,m-k\}+n.
\end{equation*}
By Lemma~\ref{lem:lb}, this gives a $(\lceil\log k\rceil + 2)$-approximation, while setting $k=n$ gives a $(\lceil\log n\rceil + 1)$-approximation. By Lemma~\ref{lem:origbod}, $OPT_{BC}(\cB)=|\cB|+OPT_{C}(\cB)$. Since $|\Phi|_{BC}=|\cB|+|\Phi|_{C}$, the same approximation ratios as above follow for (BC) as well.

Finally, Lemma~\ref{lem:kapprox} provides a different CNF that is a $k$-approximation for (C) and (BC). \qed
\end{proof}

\subsection{Literal minimum representations} \label{sec:litmin}

In this section we consider (L). The first difficulty that we have to overcome is that, unlike in the case of (C) and (BC), computing $\price_L$ is NP-hard as we show in Section~\ref{sec:pricenp}. To circumvent this, we give an $O(1)$-approximation algorithm for $\price_L(S,S')$ for any pair of sets $S,S'\subseteq V$. Note that if $S$ does not contain a body $B\in\cB$ then $\price_L(S,S')=\infty$, hence we assume that this is not the case.

We first analyze the structure of a pure Horn CNF $\Phi$ attaining the minimum in \eqref{eq:price} for (L). Starting the forward chaining procedure from $S$ with respect to $\Phi$, let $W_i$ denote the set of variables reached within the first $i$ steps. That is, $S=W_0\subsetneq W_1\subsetneq\dots\subsetneq W_t\supseteq S'$. We choose $\Phi$ in such a way that $t$ is as small as possible.
Let $B_i\in\mathcal{B}$ be a smallest body in $W_i$ for $i=0,\ldots,t-1$ and set $B_t:=S'$.

\begin{proposition} \label{cl:1}
$B_i\not\subseteq W_{i-1}$ for $i=1,\ldots,t$.
\end{proposition}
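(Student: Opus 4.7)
The case $i=t$ is immediate: since $t$ is chosen (together with $\Phi$) as the smallest round at which $W_t \supseteq S'$, we have $S' = B_t \not\subseteq W_{t-1}$. For $1 \le i \le t-1$ I will argue by contradiction, assuming $B_i \subseteq W_{i-1}$ and constructing a CNF $\Phi'$ with $\mathcal{B}_{\Phi'} \subseteq \mathcal{B}$, $S' \subseteq F_{\Phi'}(S)$, $|\Phi'|_L \le |\Phi|_L$, and whose forward chaining from $S$ reaches $S'$ in strictly fewer rounds than $\Phi$; this contradicts the joint choice of $\Phi$ (minimum in literals, then in $t$).

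After removing redundant clauses of $\Phi$ so that each $v \in F_\Phi(S)\setminus S$ is the head of a unique clause $B_v \to v$, I would construct $\Phi'$ from $\Phi$ by replacing $B_v \to v$ with $B_i \to v$ for every $v \in W_{i+1} \setminus W_{i-1}$. This yields a valid pure Horn clause with body in $\mathcal{B}$, since $B_i \in \mathcal{B}$ and $B_i \subseteq W_{i-1}$ together with $v \notin W_{i-1}$ force $v \notin B_i$. To verify $|\Phi'|_L \le |\Phi|_L$, note that each $v \in W_{i+1} \setminus W_{i-1}$ is derived at round $i$ or $i+1$ in $\Phi$, so $B_v \subseteq W_i$; the minimality of $B_i$ in $W_i$ then gives $|B_v| \ge |B_i|$. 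To bound the round count, the modified clauses all have heads outside $W_{i-1}$, so closures in the first $i-1$ rounds are unaffected and $W'_j = W_j$ for $j \le i-1$. Since $B_i \subseteq W_{i-1} = W'_{i-1}$, all new clauses $B_i \to v$ fire at round $i$ in $\Phi'$, yielding $W'_i \supseteq W_{i-1} \cup (W_{i+1} \setminus W_{i-1}) = W_{i+1}$; a short monotonicity induction on $j \ge i$ then gives $W'_j \supseteq W_{j+1}$, so $W'_{t-1} \supseteq S'$ and $\Phi'$ uses at most $t-1$ rounds.

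The contradiction now follows: either $|\Phi'|_L < |\Phi|_L$, contradicting the literal-minimality of $\Phi$, or equality holds and $\Phi'$ reaches $S'$ strictly faster than $\Phi$, contradicting the minimality of $t$. The main step to be careful about is the induction controlling the round count, where it is essential that the replaced heads lie in $W_{i+1} \setminus W_{i-1}$ (disjoint from $W_{i-1}$), so that nothing $\Phi$ derived in the first $i-1$ rounds is perturbed by the modification, and each clause inherited by $\Phi'$ fires no later than it did in $\Phi$.
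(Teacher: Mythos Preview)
Your argument is correct and follows the same approach as the paper: assuming $B_i\subseteq W_{i-1}$, rewrite the relevant clauses to use body $B_i$, observe that $|\cdot|_L$ does not increase (since $B_i$ is a smallest body in $W_i$), and that forward chaining now collapses rounds $i$ and $i{+}1$, contradicting the minimality of $t$. One minor point: your claim that $W'_j=W_j$ for $j\le i-1$ need not hold literally (the added clauses $B_i\to v$ can fire earlier if $B_i\subseteq W_{i-2}$), but only the inclusion $W'_j\supseteq W_j$ is required and that does hold; the paper sidesteps this by replacing only the clauses with heads in $W_{i+1}\setminus W_i$, which is slightly leaner than your replacement over all of $W_{i+1}\setminus W_{i-1}$.
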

\begin{proof}
Suppose to the contrary that $B_i\subseteq W_{i-1}$ for some $1\leq i \leq t-1$. By the definition of forward chaining, every variable $v\in W_{i+1}\setminus W_{i}$ is reached through a clause $B\rightarrow v$ where $B\cap (W_i\setminus W_{i-1})\neq\emptyset$. Now substitute each such clause by $B_i\rightarrow v$. As $|B_i|\leq |B|$, the $|\cdot|_L$ size of the CNF does not increase. However, the number of steps in the forward chaining procedure decreases by at least one, contradicting the choice of $\Phi$. Finally, $S'=B_t\subseteq W_{t-1}$ would contradict the minimality of $t$.
\qed
\end{proof}

Proposition~\ref{cl:1} immediately implies that $|B_0|>|B_1|>\ldots>|B_{t-1}|$.

\begin{proposition} \label{cl:2}
$W_{i+1}\setminus W_i\subseteq B_{i+1}$ for $i=0,\ldots,t-1$.
\end{proposition}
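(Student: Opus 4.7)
The plan is a proof by contradiction: assume some $v\in(W_{i+1}\setminus W_i)\setminus B_{i+1}$ exists and construct a modified CNF $\Phi'$ that either strictly improves $|\cdot|_L$ over $\Phi$ or forces a numerical collision with the strict chain $|B_0|>|B_1|>\cdots>|B_{t-1}|$ that follows as a corollary of Proposition~\ref{cl:1}. I would split into two sub-cases.

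For $i=t-1$ (where $B_t:=S'$) the hypothesis gives $v\in W_t\setminus W_{t-1}$ with $v\notin S'$, and I would simply delete from $\Phi$ every clause $B'\to v$ with $B'\subseteq W_{t-1}$. Forward chaining in the resulting $\Phi'$ agrees with that of $\Phi$ through step $t-1$ and then produces $W'_t=W_t\setminus\{v\}$, which still contains $S'$. Hence $\Phi'$ is feasible for $\price_L(S,S')$ with strictly smaller $|\cdot|_L$, contradicting the minimality of $|\Phi|_L$.

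The substantive case is $i<t-1$. I would enumerate all clauses $B^1\to v,\dots,B^k\to v$ in $\Phi$ whose body lies in $W_i$ (there is at least one, since $v$ is produced at step $i+1$), and form $\Phi'$ by deleting all of them and inserting the single clause $B_{i+1}\to v$; this is a legal clause because $v\notin B_{i+1}$ by assumption. Since $B_{i+1}$ is a smallest body of $\mathcal{B}$ inside $W_{i+1}\supseteq W_i$, we have $|B_{i+1}|\le|B^l|$ for every $l$, so $|\Phi'|_L\le|\Phi|_L$, with equality only if $k=1$ and $|B^1|=|B_{i+1}|$. The step I expect to be the principal obstacle is verifying $S'\subseteq F_{\Phi'}(S)$: I would track the forward chaining under $\Phi'$ and argue inductively that $W'_j=W_j$ for $j\le i$ (Proposition~\ref{cl:1} prevents $B_{i+1}\to v$ from firing prematurely), that $W'_{i+1}=W_{i+1}\setminus\{v\}$, that the new clause $B_{i+1}\to v$ then fires at step $i+2$ (because $v\notin B_{i+1}$), and consequently $W'_{j+1}\supseteq W_j$ for every $j\ge i+1$, so that in particular $W'_{t+1}\supseteq W_t\supseteq S'$.

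With $\Phi'$ shown to be feasible, the argument closes by a clean dichotomy. If $k\ge 2$ or some $|B^l|>|B_{i+1}|$, then $|\Phi'|_L<|\Phi|_L$, contradicting the minimality of $|\Phi|_L$. Otherwise $k=1$ and $|B^1|=|B_{i+1}|$, so $B^1$ is itself a smallest body in $W_{i+1}$; but $B^1\subseteq W_i$ then forces $|B_i|\le|B^1|=|B_{i+1}|$, contradicting the strict inequality $|B_i|>|B_{i+1}|$ noted immediately after Proposition~\ref{cl:1}. Either branch yields the desired contradiction, completing the proposed proof.
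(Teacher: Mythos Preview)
Your argument is correct and follows essentially the same idea as the paper's proof: replace a clause $B\to v$ (with $B\subseteq W_i$) by $B_{i+1}\to v$ to obtain a strictly cheaper feasible CNF. The paper avoids your final dichotomy by observing directly that any such body $B$ satisfies $|B|\ge|B_i|>|B_{i+1}|$ (the first inequality because $B_i$ is a smallest body in $W_i$), so the substitution is always a strict improvement and the case split is unnecessary; conversely, your separate deletion argument for $i=t-1$ is a clean addition that the paper's uniform treatment glosses over, since there one would need $B_t=S'$ to be a body smaller than $B_{t-1}$.
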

\begin{proof}
Let $i$ be the smallest index that violates the condition. Take an arbitrary variable $v\in W_{i+1}\setminus W_i$. Then $v$ is reached in the $(i+1)$th step of the forward chaining procedure from a body of size at least $|B_i|$. If we substitute this clause by $B_{i+1}\rightarrow v$, the resulting CNF still satisfies $F_{\Phi}(B_0)\supseteq S'$ but has smaller $|\cdot|_L$ size by $|B_{i+1}|<|B_{i}|$, contradicting the minimality of $\Phi$. \qed
\end{proof}

By Proposition~\ref{cl:2}, $W_{i+1}\setminus W_i=B_{i+1}\setminus(S\cup \bigcup_{j=1}^{i} B_j)$. Define 
\[
\Phi^{(1)}:=\bigwedge_{i=0}^{t-1} B_i\rightarrow (B_{i+1}\setminus(S\cup\bigcup_{j=1}^{i} B_j)). 
\]
Observe that $\Phi^{(1)}$ has a simple structure which is based on a linear order of bodies $B_0,\dots,B_t$.

\begin{proposition} \label{cl:3}
$|\Phi^{(1)}|_L=|\Phi|_L$.
\end{proposition}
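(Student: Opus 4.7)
The plan is to show that $\Phi^{(1)}$ is itself a feasible candidate for $\price_L(S,S')$ whose literal count is at most $|\Phi|_L$; combined with the optimality of $\Phi$, this forces the desired equality.

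For feasibility, note first that $\cB_{\Phi^{(1)}}=\{B_0,\ldots,B_{t-1}\}\subseteq\cB$, since for each $i\in\{0,\ldots,t-1\}$ the body $B_i$ was chosen to be a smallest body of $\cB$ contained in $W_i$. To establish $S'\subseteq F_{\Phi^{(1)}}(S)$, I would prove by induction on $i$ that $F^{i}_{\Phi^{(1)}}(S)=W_i$ for $i=0,\ldots,t$. Iterating Proposition~\ref{cl:2} together with $B_j\subseteq W_j$ yields the convenient rewriting $W_i=S\cup B_1\cup\cdots\cup B_i$. For the inductive step, the bodies of $\Phi^{(1)}$ contained in $W_i$ are exactly $B_0,\ldots,B_i$: any $B_j$ with $j>i$ would then satisfy $B_j\subseteq W_{j-1}$ (since $W_i\subseteq W_{j-1}$), contradicting Proposition~\ref{cl:1}. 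Firing the clauses with bodies $B_0,\ldots,B_i$ produces the heads $\bigcup_{j=0}^{i}(B_{j+1}\setminus(S\cup B_1\cup\cdots\cup B_j))$; the terms with $j<i$ coincide with $W_{j+1}\setminus W_j\subseteq W_i$ and so contribute nothing new, while the term with $j=i$ equals precisely $W_{i+1}\setminus W_i$. Hence $F^{i+1}_{\Phi^{(1)}}(S)=W_{i+1}$, and after $t$ steps we reach $W_t\supseteq S'$.

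For the literal comparison, a direct computation gives $|\Phi^{(1)}|_L=\sum_{i=0}^{t-1}(|B_i|+1)\cdot|W_{i+1}\setminus W_i|$. For the matching lower bound on $|\Phi|_L$, fix $i$ and $v\in W_{i+1}\setminus W_i$: since $v$ is first produced at step $i+1$ of the forward chaining of $\Phi$, there is a clause $B\rightarrow v$ in $\Phi$ with $B\subseteq W_i$; because $\cB_\Phi\subseteq\cB$, the minimality of $|B_i|$ among bodies of $\cB$ contained in $W_i$ yields $|B|\geq|B_i|$, so this occurrence of $v$ as a head contributes at least $|B_i|+1$ literals to $|\Phi|_L$. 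As the sets $W_{i+1}\setminus W_i$ are pairwise disjoint, summing over $i$ and $v$ gives $|\Phi|_L\geq|\Phi^{(1)}|_L$, and the minimality of $\Phi$ forces equality. I expect the main obstacle to be the careful bookkeeping in the inductive feasibility step above, in particular the verification that no body $B_j$ with $j>i$ is prematurely activated; the rest follows directly from Propositions~\ref{cl:1} and~\ref{cl:2} combined with the minimality property defining $B_i$.
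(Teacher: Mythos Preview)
Your argument is correct and follows essentially the same line as the paper's proof. Both show $|\Phi^{(1)}|_L\leq|\Phi|_L$ by matching each head $v\in W_{i+1}\setminus W_i$ in $\Phi^{(1)}$ (contributing $|B_i|+1$ literals) with a clause $B\to v$ of $\Phi$ satisfying $B\subseteq W_i$ and hence $|B|\geq|B_i|$, and then invoke the optimality of $\Phi$ for the reverse inequality; you add an explicit verification of feasibility of $\Phi^{(1)}$ that the paper leaves implicit.
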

\begin{proof}
Take an arbitrary variable $v\in B_{i+1}\setminus(S\cup \bigcup_{j=1}^{i} B_j)$ for some $i=0,\ldots,t-1$. By the observation above, $v\in W_{i+1}\setminus W_i$. This means that $\Phi$ has at least one clause entering $v$, say $B\rightarrow v$, for which $B\subseteq W_i$ and so $|B|\geq |B_i|$. However, $\Phi^{(1)}$ has exactly one clause entering $v$, namely $B_i\rightarrow v$. This implies that $|\Phi^{(1)}|_L\leq |\Phi|_L$, and equality holds by the minimality of $\Phi$. \qed
\end{proof}

The proposition implies that $\Phi^{(1)}$ also realizes $\price_L(S,S')$. We know no efficient algorithms to compute $\Phi^{(1)}$, thus, using the next two propositions, we define a CNF that approximates $\Phi^{(1)}$ well and can be computed efficiently.

Let $i_0=0$ and for $j>0$ let $i_j$ denote the smallest index for which $|B_{i_j}|\leq |B_{i_{j-1}}|/2$. Let $r-1$ be the largest value for which $B_{i_{r-1}}$ exists and set $B_{i_r}:=S'$. Now define
\[
\Phi^{(2)}:=\bigwedge_{j=0}^{r-1} B_{i_j}\rightarrow (B_{i_{j+1}}\setminus(S\cup \bigcup_{\ell=1}^{j} B_{i_\ell})).
\]
It is easy to see that $F_{\Phi^{(2)}}(S)\supseteq S'$.

\begin{proposition} \label{cl:4}
$|\Phi^{(2)}|_L\leq 2|\Phi^{(1)}|_L$.
\end{proposition}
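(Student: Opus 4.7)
The plan is to account, for each variable $v$ appearing on some right-hand side of $\Phi^{(2)}$, its contribution to $|\Phi^{(2)}|_L$ against its contribution to $|\Phi^{(1)}|_L$, and to verify that the per-variable ratio never exceeds $2$. Set $H_i:=B_{i+1}\setminus(S\cup\bigcup_{\ell=1}^{i}B_\ell)$ and $H'_j:=B_{i_{j+1}}\setminus(S\cup\bigcup_{\ell=1}^{j}B_{i_\ell})$ for the heads appearing in $\Phi^{(1)}$ and $\Phi^{(2)}$; each family is pairwise disjoint by construction, and $\bigsqcup_j H'_j\subseteq\bigsqcup_i H_i$ since $\bigcup_{j=1}^r B_{i_j}\subseteq\bigcup_{i=1}^t B_i$.

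The first step is bookkeeping: for every $v\in H'_j$ I would locate the unique index $i(v)$ with $v\in H_{i(v)}$. Because $v\in B_{i_{j+1}}$ but $v\notin B_{i_\ell}$ for $\ell\leq j$, the smallest index of a body in $B_1,\dots,B_t$ that contains $v$ must lie in $(i_j,i_{j+1}]$, which forces $i(v)\in[i_j,i_{j+1}-1]$.

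The second step extracts a pointwise size comparison from the halving rule. By the definition of $i_{j+1}$ as the smallest index with $|B_{i_{j+1}}|\leq|B_{i_j}|/2$ when $j<r-1$, and by the maximality of $i_{r-1}$ when $j=r-1$, every $i\in[i_j,i_{j+1}-1]$ satisfies $|B_i|>|B_{i_j}|/2$; integrality then gives $|B_{i_j}|+1\leq 2(|B_i|+1)$. Hence each $v\in H'_j$ contributes $|B_{i_j}|+1$ to $|\Phi^{(2)}|_L$ while contributing at least $(|B_{i_j}|+1)/2$ to $|\Phi^{(1)}|_L$ via the clause indexed by $i(v)$. Summing over $v\in\bigsqcup_j H'_j$ and using the containment above yields $|\Phi^{(2)}|_L\leq 2|\Phi^{(1)}|_L$.

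The main obstacle is the index identification $i(v)\in[i_j,i_{j+1}-1]$, which tacitly relies on the strict decrease $|B_0|>\cdots>|B_{t-1}|$ from Proposition~\ref{cl:1}: without it, the halving rule need not define a geometric subsequence, and the per-variable charge breaks down. With that strict decrease in hand, the remainder is a direct summation.
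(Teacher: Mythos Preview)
Your approach is the same per-variable charging argument the paper uses: for each $v$ that appears as a head in $\Phi^{(2)}$, locate its unique occurrence as a head in $\Phi^{(1)}$ and bound the body-size ratio by $2$ via the halving rule. That strategy is sound and matches the paper.

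There is, however, a gap in your index localization. From $v\in B_{i_{j+1}}$ and $v\notin B_{i_\ell}$ for $\ell\leq j$ you conclude that the smallest index $i'$ with $v\in B_{i'}$ lies in $(i_j,i_{j+1}]$. That deduction is not valid: you have only excluded the particular indices $i_1,\dots,i_j$, not every index $\leq i_j$. It is entirely possible that $v\in B_{i'}$ for some $i'<i_j$ with $i'\notin\{i_1,\dots,i_j\}$, giving $i(v)<i_j$. Fortunately this does not hurt you: only the upper bound $i(v)\leq i_{j+1}-1$ (which you correctly derive from $v\in B_{i_{j+1}}$) is needed. If $i(v)<i_j$, the strict decrease $|B_0|>\cdots>|B_{t-1}|$ gives $|B_{i(v)}|>|B_{i_j}|$, and the ratio $(|B_{i_j}|+1)/(|B_{i(v)}|+1)$ is then below $1$. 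If $i(v)\in[i_j,i_{j+1}-1]$, your halving argument applies verbatim. Either way the per-variable ratio is at most $2$, and the summation goes through. So drop the unjustified lower bound on $i(v)$ and split into the two cases, or simply observe that $i(v)\leq i_{j+1}-1$ already yields $|B_{i(v)}|\geq |B_{i_{j+1}-1}|>|B_{i_j}|/2$.
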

\begin{proof}
Take an arbitrary variable $v\in B_{i_{j+1}}\setminus(S\cup \bigcup_{\ell=1}^{j} B_{i_\ell})$ for some $j=0,\dots,r-1$. Then both $\Phi^{(1)}$ and $\Phi^{(2)}$ contain a single clause entering $v$. Namely, $v$ is reached from $B_{i_{j+1}-1}$ in $\Phi^{(1)}$ and from $B_{i_j}$ in $\Phi^{(2)}$. By the definition of the sequence $i_0,i_1,\ldots,i_{r-1}$, we get $|B_{i_j}|\leq 2|B_{i_{j+1}-1}|$, concluding the proof. \qed
\end{proof}

Although $\Phi^{(2)}$ gives a $2$-approximation for $|\Phi|_L$, it is not clear how we could find such a representation. Define 
\[
\Phi^{(3)}:=\bigwedge_{j=0}^{r-1} B_{i_j}\rightarrow (B_{i_{j+1}}\setminus (S\cup B_{i_j})). 
\]
The only difference between $\Phi^{(2)}$ and $\Phi^{(3)}$ is that we add unnecessary clauses to the representation. However, the next claim shows that the size of the formula cannot increase a lot.

\begin{proposition} \label{cl:5}
$|\Phi^{(3)}|_L\leq \frac{27}{17}|\Phi^{(2)}|_L$.
\end{proposition}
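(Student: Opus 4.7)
The plan is to prove $|\Phi^{(3)}|_L \le \tfrac{27}{17}|\Phi^{(2)}|_L$ by a per-variable accounting of literals. Since $\Phi^{(2)}$ and $\Phi^{(3)}$ use the same sequence of bodies $B_{i_0},\dots,B_{i_{r-1}}$ and differ only in their head sets, we may write $|\Phi|_L=\sum_v c_\Phi(v)$, where $c_\Phi(v)=\sum_{j\,:\,v\in H_j}(|B_{i_j}|+1)$ is the total literal contribution of $v$. Variables $v\in S$ are never heads in either CNF and contribute $0$ on both sides, so it suffices to prove $c_{\Phi^{(3)}}(v)\le \tfrac{27}{17}\,c_{\Phi^{(2)}}(v)$ for every $v\notin S$.

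Fix such a $v$ and let $a_1<a_2<\dots<a_s$ enumerate the indices for which $v\in B_{i_{a_\ell}}$, with the convention $B_{i_r}=S'$. Because $B_{i_0}$ is a smallest body inside $W_0=S$ we have $B_{i_0}\subseteq S$, and together with $v\notin S$ this forces $a_1\ge 1$. Inspecting the definitions, $v$ appears as a head in exactly one clause of $\Phi^{(2)}$, namely the one with body $B_{i_{a_1-1}}$, so $c_{\Phi^{(2)}}(v)=|B_{i_{a_1-1}}|+1$. In $\Phi^{(3)}$, $v$ is a head in the clause with body $B_{i_{a_\ell-1}}$ iff $a_\ell-1\notin\{a_1,\dots,a_s\}$, which is automatic for $\ell=1$ and, for $\ell\ge 2$, equivalent to $a_\ell>a_{\ell-1}+1$. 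Calling these the \emph{non-consecutive} indices and writing $L\subseteq\{1,\dots,s\}$ for their set, one gets $c_{\Phi^{(3)}}(v)=\sum_{\ell\in L}(|B_{i_{a_\ell-1}}|+1)$.

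The last step is to bound the ratio $c_{\Phi^{(3)}}(v)/c_{\Phi^{(2)}}(v)$. Enumerating $L$ as $\ell_1<\dots<\ell_{s'}$, the defining property of $L$ implies $a_{\ell_{i+1}}\ge a_{\ell_i}+2$, so the halving $|B_{i_{j+1}}|\le|B_{i_j}|/2$ yields $|B_{i_{a_{\ell_i}-1}}|\le|B_{i_{a_1-1}}|/4^{\,i-1}$, i.e.\ the contributions form a geometric series with ratio $1/4$. Since body sizes are positive integers, $|B_{i_{a_{\ell_{s'}}-1}}|\ge 1$ forces $|B_{i_{a_1-1}}|\ge 4^{s'-1}$. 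Setting $B=|B_{i_{a_1-1}}|$, the claim reduces to the numerical inequality
\[
\frac{\sum_{i=1}^{s'}(B/4^{i-1}+1)}{B+1}\;\le\;\frac{27}{17}
\]
for every integer $s'\ge 1$ and every $B\ge 4^{s'-1}$. I would verify this by a short case check on small $s'$: the geometric tail alone only delivers the asymptotic factor $4/3$, so the real work is in the small-$s'$, small-$B$ regime where the additive $+1$ terms matter most; a direct computation shows the maximum occurs near $s'=3$, $B=16$ and stays safely below $27/17$. Summing the per-variable inequality over $v\notin S$ completes the proof, and the only obstacle is this last numerical case check.
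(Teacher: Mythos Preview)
Your per-variable accounting is the same skeleton as the paper's proof: both fix a head variable $v$, identify the unique clause $B_{i_{a_1-1}}\to v$ in $\Phi^{(2)}$, list the clauses of $\Phi^{(3)}$ entering $v$, and bound the ratio of literal contributions. The difference is in how the extra clauses are bounded. The paper is cruder: it simply observes that every body contributing to $v$ in $\Phi^{(3)}$ beyond $B_{i_{a_1-1}}$ lies among $B_{i_{a_1+1}},\dots,B_{i_{r-1}}$, then sums \emph{all} of those (ratio $1/2$ geometric series plus a $\lfloor\log x\rfloor$ count of terms), arriving at the elementary inequality $\lfloor\log x\rfloor/(x+1)+x/(2(x+1))-2/(x+1)\le 10/17$, tight at $x=16$. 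You instead keep track of exactly which bodies contribute (your set $L$), show the corresponding $a_{\ell_i}$ are spaced at least $2$ apart, and get a ratio-$1/4$ geometric series. Your bookkeeping is sharper; carried through it actually gives $24/17$ rather than $27/17$, with the maximum at $s'=3$, $B=16$ as you guessed.

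Your outstanding numerical check is routine: for $s'\ge 2$ the function $B\mapsto \bigl(\tfrac{4B}{3}(1-4^{-s'})+s'\bigr)/(B+1)$ is decreasing in $B$ (the numerator's slope $\tfrac{4}{3}(1-4^{-s'})<2\le s'$), so it is maximised at $B=4^{s'-1}$, giving $\bigl((4^{s'}-1)/3+s'\bigr)/(4^{s'-1}+1)$; this equals $1,\ 7/5,\ 24/17,\ 89/65,\dots$ for $s'=1,2,3,4$ and tends to $4/3$, with global maximum $24/17<27/17$. So your argument is complete once you write out these two lines.
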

\begin{proof}
Take an arbitrary variable $v$ that appears as the head of a clause in the representation $\Phi^{(3)}$. Let $j$ be the smallest index for which $v\in B_{i_{j+1}}\setminus(S\cup \bigcup_{\ell=1}^{j} B_{i_\ell})$. Then $\Phi^{(2)}$ contains a single clause entering $v$, namely $B_{i_j}\rightarrow v$. On the other hand, the set $\{B_{i_j}\rightarrow v\}\cup\{B_{i_\ell}\rightarrow v\mid \ell=j+2,\dots,r-1\}$ contains all the clauses of $\Phi^{(3)}$ that enter $v$. By the definition of the sequence $i_0,i_1,\ldots,i_{r-1}$, we get $\sum_{\ell=j+2}^{r-1}(|B_{i_\ell}|+1)= (r-j-2)+\sum_{\ell=j+2}^{r-1}|B_{i_\ell}|\leq \lfloor\log|B_{i_{j+1}}|\rfloor+|B_{i_j}|/2-1\leq \lfloor\log|B_{i_{j}}|\rfloor+|B_{i_j}|/2-2$. We get at most this many extra literals in $\Phi^{(3)}$ on top of the $|B_{i_j}|+1$ literals in $\Phi^{(2)}$. As $\lfloor\log x\rfloor/(x+1)+x/(2(x+1))-2/(x+1)\leq 10/17$ for $x\in\mathbb{Z}_+$, the statement follows. \qed
\end{proof}

By Propositions~\ref{cl:3},~\ref{cl:4} and~\ref{cl:5},
\begin{equation}
|\Phi^{(3)}|_L\leq \frac{27}{17}|\Phi^{(2)}|_L\leq \frac{54}{17}|\Phi^{(1)}|_L=\frac{54}{17}|\Phi|_L. \label{eq:approx}
\end{equation}

\begin{lemma}\label{lem:lmin}
There exists an efficient algorithm to construct a CNF $\Lambda(S,S')$ such that $|\Lambda(S,S')|_L\leq \frac{54}{17}\price_L(S,S')$, $\cB_{\Lambda(S,S')}\subseteq\cB$, and $F_{\Lambda(S,S')}(S)\supseteq S'$.
\end{lemma}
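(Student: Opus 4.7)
The plan is to exhibit an efficient algorithm that searches, via dynamic programming, for the cheapest CNF drawn from the structured family to which $\Phi^{(3)}$ belongs.

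By Propositions~\ref{cl:3}--\ref{cl:5} and the chain of inequalities~\eqref{eq:approx}, some CNF of the form
\[
\Phi^{(3)}\;=\;\bigwedge_{j=0}^{r-1} B_{i_j}\rightarrow (B_{i_{j+1}}\setminus(S\cup B_{i_j}))
\]
satisfies $|\Phi^{(3)}|_L\leq\tfrac{54}{17}\price_L(S,S')$, where $B_{i_0},\ldots,B_{i_{r-1}}\in\cB$, $B_{i_r}:=S'$, $B_{i_0}\subseteq S$ (since $B_{i_0}=B_0$ is a smallest body in $W_0=S$), and $|B_{i_{j+1}}|\leq|B_{i_j}|/2$ by the definition of the indices $i_j$. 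The goal thus reduces to finding a chain $B_0,B_1,\ldots,B_{r-1}\in\cB$ with $B_0\subseteq S$ and $|B_{j+1}|\leq|B_j|/2$ whose associated CNF $\bigwedge_{j=0}^{r-1}B_j\to(B_{j+1}\setminus(S\cup B_j))$ — with $B_r:=S'$ — has minimum $|\cdot|_L$-cost.

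Such a chain is produced by the following dynamic program. For each body $B\in\cB$ define
\[
f(B):=\min\Big\{(|B|+1)|S'\setminus(S\cup B)|,\;\min_{B'\in\cB,\,|B'|\leq|B|/2}\big[(|B|+1)|B'\setminus(S\cup B)|+f(B')\big]\Big\},
\]
evaluated in order of increasing body size (the halving constraint makes the recursion well-founded). Reconstruct the chain achieving the minimum of $f(B_0)$ over starting bodies $B_0\in\cB$ with $B_0\subseteq S$, and output the corresponding CNF $\Lambda(S,S')$. The algorithm runs in polynomial time because there are $|\cB|$ states with at most $|\cB|$ transitions each, and since $|B_{j+1}|\leq|B_j|/2$ any valid chain has length at most $\lceil\log k\rceil+1$.

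It remains to verify the three required properties: (i) all bodies of $\Lambda(S,S')$ lie in $\cB$ by construction; (ii) $F_{\Lambda(S,S')}(S)\supseteq S'$ follows by induction — $B_0\subseteq S$ ensures clause $0$ fires, and once clause $j-1$ has fired, $B_j$ lies in the current closure, so clause $j$ fires, and hence after all $r$ clauses have fired the closure contains $B_r=S'$; (iii) the chain underlying $\Phi^{(3)}$ is feasible for the DP, hence $|\Lambda(S,S')|_L\leq|\Phi^{(3)}|_L\leq\tfrac{54}{17}\price_L(S,S')$. The main conceptual point is the self-propagating structure of these chain CNFs: once the first body lies in $S$, every subsequent body is added to the closure by the preceding clause, and so one can legitimately restrict the search to such chains without worrying about reachability — this is exactly what makes the DP, with its local ``body + size'' state, polynomial and correct.
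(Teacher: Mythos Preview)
Your proposal is correct and follows the same idea as the paper: search for a cheapest chain-structured CNF of the form $\bigwedge_j B_j\to(B_{j+1}\setminus(S\cup B_j))$, relying on the fact that $\Phi^{(3)}$ is feasible for this search and already $\tfrac{54}{17}$-approximate. The paper's version is slightly cleaner in that it drops your halving constraint $|B_{j+1}|\le|B_j|/2$ altogether and simply computes an ordinary shortest $B_0$--$S'$ path in the extended body graph with arc weights $w(B,B')=(|B|+1)\,|B'\setminus(S\cup B)|$; the halving property is needed only to certify that $\Phi^{(3)}$ lies in the feasible set, not to restrict the search itself.
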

\begin{proof}
We consider an extension of the body graph by adding $S'$ to $V(D_\cB)$. We also define arc-weights by setting $w(B,B'):=|B'\setminus (S\cup B)|(|B|+1)$ for $B,B'\in\mathcal{B}\cup\{S'\}$. Let $B_0$ be a smallest body contained in $S$ (as defined before Proposition~\ref{cl:1}). Compute a shortest path $P$ from $B_0$ to $S'$ and define
\begin{equation}
\Lambda(S,S') = \bigwedge_{(B,B')\in P} B\rightarrow(B'\setminus(S\cup B)). \label{eq:path}
\end{equation}
Note that, by definition, $|\Lambda(S,S')|_L$ is the weight of the shortest path $P$, while $|\Phi^{(3)}|_L$ is the length of one of the paths from $S$ to $S'$. By \eqref{eq:approx}, $|\Lambda(S,S')|_L\leq |\Phi^{(3)}|_L\leq \frac{54}{17}|\Phi|_L$. That is, $\Lambda(S,S')$ provides a $\frac{54}{17}$-approximation for $\price_L(S,S')$ as required, finishing the proof of the lemma. \qed
\end{proof}

We prove that the algorithm described in Procedure 2 provides the stated approximated factor for (L). We note that a minimum weight spanning in-arborescence of a directed graph rooted at a fixed node can be found in polynomial time, see \cite{CL65,E67}. Let $B_{\min}$ be a smallest body in $\cB$ and denote $\mathcal{B}'=\cB\setminus\{B_{\min}\}$. We define the weight of an arc $(B,B')$ in the body graph to be $w(B,B')=|\Lambda(B,B')|_L$.

\begin{algorithm2e}
\caption{Approximation of (L)} \label{proc:litmin}
    
 Let $B_{\min}$ be a smallest body in $\cB$.\    

 Set $w(B,B')=|\Lambda(B,B')|_L$ for $(B,B')\in E(D_{\mathcal{B}})$.\

 Determine a minimum $w$-weight spanning in-arborescence $T$ of $D_{\mathcal{B}}$ such that $T$ is rooted at $B_{\min}$. 
 
 Output $\Phi=\bigwedge_{(B,B')\in T}\Lambda(B,B')\wedge (B_{\min}\rightarrow (V\setminus B_{\min}))$. \linebreak /$*$ Here $\Lambda(B,B')$ is defined as in \eqref{eq:path}. $*$/
\end{algorithm2e}

Choose a smallest body $B_{\min}$ in $\cB$ and let $\delta:=|B_{\min}|$. Set $w(B,B'):=|\Lambda(B,B')|_L$ for $(B,B')\in E(D_{\mathcal{B}})$.

\begin{lemma} \label{lem:minarb2}
Let $T$ denote a minimum $w$-weight spanning in-arborescence in $D_\cB$ such that $T$ is rooted at $B_{\min}$. Then $$\left|\bigwedge_{(B,B')\in T}\Lambda(B,B')\right|_L\leq\left(\frac{108}{17}\lceil\log k\rceil+1\right) OPT_L(\cB),$$ where $k$ is the size of a largest body in $\cB$.
\end{lemma}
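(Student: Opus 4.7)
The plan is to parallel the proof of Lemma~\ref{lem:minarb}: exhibit a specific spanning in-arborescence $T'$ of $D_\cB$ rooted at $B_{\min}$ whose total $w$-weight is at most $\bigl(\tfrac{108}{17}\lceil\log k\rceil+1\bigr)\,OPT_L(\cB)$. Since $T$ is the minimum-$w$-weight such arborescence and $\bigl|\bigwedge_{(B,B')\in T}\Lambda(B,B')\bigr|_L\le\sum_{(B,B')\in T}w(B,B')\le\sum_{(B,B')\in T'}w(B,B')$, the stated bound then transfers to $T$.

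First I would construct $T'$ by a Bor\r uvka-style halving procedure, directly analogous to the one used in Lemma~\ref{lem:minarb}. Start with the empty arc-set on $\cB$. In each iteration, for every weakly connected component (an in-arborescence) whose root is not $B_{\min}$, pick a minimum-$w$-weight outgoing arc from its root to a vertex of another component; add all chosen arcs and delete one arc from every directed cycle produced. By never selecting an outgoing arc from $B_{\min}$'s component, that component becomes a sink in the meta-graph on components, so no directed cycle can pass through it, which preserves the invariant that $B_{\min}$ is the root of its own component throughout the construction.

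Next I would bound the cost of a single iteration by combining Lemma~\ref{lem:main}, applied to the current partition $\cB=\cB_1\cup\cdots\cup\cB_q$ with the chosen roots $r_j\in\cB_j$, and Lemma~\ref{lem:lmin}:
\[
\sum_{j}w(r_j,B^*_j)\;=\;\sum_{j}\bigl|\Lambda(r_j,B^*_j)\bigr|_L\;\le\;\tfrac{54}{17}\sum_{j}\price_L(r_j,B^*_j)\;\le\;\tfrac{54}{17}\,OPT_L(\cB),
\]
where $B^*_j\in\cB\setminus\cB_j$ is the neighbor selected for $r_j$. Since the halving at least roughly halves the component count per iteration, $\lceil\log k\rceil$ rounds reduce the count to $\max\{1,m/k\}$ while accruing at most $\tfrac{54}{17}\lceil\log k\rceil\,OPT_L(\cB)$.

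The hardest step will be the final assembly of the $\Theta(m/k)$ residual components into one in-arborescence rooted at $B_{\min}$. Unlike the situation for (C) in Lemma~\ref{lem:minarb}, where a direct arc between a residual root and the designated root contributes at most $k$ clauses and the total is $\le m-k\le OPT_C(\cB)$, a direct arc $(r,B_{\min})$ under measure $(L)$ can cost up to $\tfrac{54}{17}(|r|+1)\delta$ literals, and the sum of such contributions over the residual roots is not automatically bounded by a constant multiple of $OPT_L(\cB)$. I expect the resolution to be an additional batch of up to $\lceil\log k\rceil$ halving rounds executed on the residual meta-graph, each still charged at most $\tfrac{54}{17}\,OPT_L(\cB)$ and together yielding the doubled coefficient $2\cdot\tfrac{54}{17}=\tfrac{108}{17}$, followed by a single connecting arc to $B_{\min}$ whose weight is absorbed into the additive $OPT_L(\cB)$ term with the help of the lower bound from Lemma~\ref{lem:lb}. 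Summing the two halving phases and the final arc produces the claimed $\bigl(\tfrac{108}{17}\lceil\log k\rceil+1\bigr)\,OPT_L(\cB)$ bound.
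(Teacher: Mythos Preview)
Your approach matches the paper's: run $2\lceil\log k\rceil$ Bor\r{u}vka-style halving rounds (the paper phrases this as halving until at most $\max\{1,m/k^2\}$ components remain, which takes at most $\lceil\log k^2\rceil\le 2\lceil\log k\rceil$ iterations), bound each round by $\frac{54}{17}\,OPT_L(\cB)$ via Lemmas~\ref{lem:main} and~\ref{lem:lmin}, and then connect the residual roots directly to $B_{\min}$.

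The one inaccuracy is in your final step. After $2\lceil\log k\rceil$ rounds there may still be up to $\max\{1,m/k^2\}$ components, not just one, so in general you need several connecting arcs to $B_{\min}$, not a single one. The paper handles this by observing that for each residual root $r$ the direct implication $r\to(B_{\min}\setminus r)$ already witnesses $w(r,B_{\min})\le(|r|+1)\,|B_{\min}|\le(k+1)\delta$, so the total cost of the at most $\max\{1,m/k^2\}$ final arcs is bounded by
\[
(k+1)\,\delta\,\max\{1,m/k^2\}\ \le\ \max\{n\delta,\,2m\}\ \le\ OPT_L(\cB),
\]
the last inequality being exactly Lemma~\ref{lem:lb}. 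This is the missing computation behind your ``absorbed into the additive $OPT_L(\cB)$ term,'' and it is precisely why the halving must be pushed to the threshold $m/k^2$ rather than $m/k$.
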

\begin{proof}
We construct a subgraph $T$ of $D_\cB$ such that (i) it is a spanning in-arborescence, and (ii) $|\bigwedge_{(B,B')\in T}\Lambda(B,B')|_L\leq (2\lceil\log k\rceil +1) OPT_L(\cB)$. We start with the directed graph $T_1$ on node set $\cB$ that has no arcs. In a general step of the algorithm, $T_i$ will denote the graph constructed so far. We maintain the property that $T_i$ is a branching, that is, a collection of node-disjoint in-arborescences spanning all nodes. In an iteration, for each such in-arborescence we choose an arc of minimum weight with respect to $w$ that goes from the root of the in-arborescence to some other component. We add these arcs to $T_i$, and for each directed cycle created, we delete one of its arcs. This results in a graph $T_{i+1}$ with at most half the number of weakly connected components that $T_i$ has, all being in-arborescences. We repeat this until the number of components becomes at most $\max\{1,m/k^2\}$. To reach this, we need at most $\lceil\log k^2\rceil\leq 2\lceil\log k\rceil$ iterations. Finally, we add an arc from all the other roots to $B_{\min}$ and delete all the arcs leaving $B_{\min}$, obtaining a spanning in-arborescence $T$ rooted at $B_{\min}$. 

It remains to show that $T$ also satisfies (ii). In the final stage, we add at most $\max\{1,m/k^2\}$ arcs to $T$ whose total weight is upper bounded by $(k+1)\delta\max\{1,m/k^2\}\leq \max\{n\delta,2m\}\leq OPT_L(\cB)$, where the last inequality follows by Lemma \ref{lem:lb}.  Now we bound the rest of $\bigwedge_{(B,B')\in T}\Lambda(B,B')$. In iteration $i$, components of $T_i$ define a partition $\mathcal{B}=\mathcal{B}_1\cup\dots\cup\mathcal{B}_q$. Let us denote by $B_j$ the body corresponding to the root of the arborescence with node-set $\cB_j$. Let us consider the arcs $\{(B_j,B'_j)\mid j=1,\dots,q\}$ chosen to be added in the $i$th iteration. Now we obtain 
\begin{eqnarray*}
\left|\displaystyle\bigwedge_{(B,B')\in T_{i+1}\setminus T_i}\Lambda(B,B')\right|_L&=& \displaystyle\sum_{j=1}^q w(B_j,B'_j)=
\displaystyle\sum_{j=1}^q\min_{B\in\cB\setminus\cB_j}w(B_j,B)\\
&\leq&\frac{54}{17}\displaystyle\sum_{j=1}^q\min_{B\in\cB\setminus\cB_j}\price_L(B_j,B)\leq \frac{54}{17}OPT_L(\cB),
\end{eqnarray*}
where the first and second inequalities follow by Lemmas \ref{lem:lmin} and \ref{lem:main}, respectively. 
Since we have at most $2\lceil\log k\rceil$ iterations, the lemma follows. \qed
\end{proof}

\begin{theorem} \label{thm:lmin}
For key Horn functions, there exists a polynomial time \linebreak $\min\{\frac{108}{17}\lceil\log k\rceil+2,k\}$-approximation algorithm for (L), where $k$ is the size of a largest body in $\cB$. 
\end{theorem}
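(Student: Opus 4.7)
The plan is to analyze the output of Procedure~\ref{proc:litmin} and then combine its performance with the baseline $k$-approximation from Lemma~\ref{lem:kapprox}. First, I would verify that the formula $\Phi=\bigwedge_{(B,B')\in T}\Lambda(B,B')\wedge(B_{\min}\rightarrow(V\setminus B_{\min}))$ is actually a representation of $h_\cB$. Since $T$ is an in-arborescence rooted at $B_{\min}$, for any body $B\in\cB$ the forward chaining procedure can traverse the path in $T$ from $B$ to $B_{\min}$, using at each step that $\cB_{\Lambda(B,B')}\subseteq\cB$ and $F_{\Lambda(B,B')}(B)\supseteq B'$ by Lemma~\ref{lem:lmin}. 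Once $B_{\min}$ is reached, the extra clause $B_{\min}\rightarrow(V\setminus B_{\min})$ closes $B$ to $V$, so $F_\Phi(B)=V$ for every $B\in\cB$.

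Next I would bound $|\Phi|_L$ by separating the two ingredients. The arborescence part is handled directly by Lemma~\ref{lem:minarb2}, which gives
\[
\Bigl|\bigwedge_{(B,B')\in T}\Lambda(B,B')\Bigr|_L \;\leq\; \Bigl(\tfrac{108}{17}\lceil\log k\rceil+1\Bigr)\,OPT_L(\cB).
\]
For the single extra clause, a direct count gives $|B_{\min}\rightarrow(V\setminus B_{\min})|_L=(\delta+1)(n-\delta)\leq n(\delta+1)$, and by the last inequality of Lemma~\ref{lem:lb} this is at most $OPT_L(\cB)$. Adding the two contributions yields $|\Phi|_L\leq\bigl(\tfrac{108}{17}\lceil\log k\rceil+2\bigr)OPT_L(\cB)$, which is the first term in the desired minimum.

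For the second term, Lemma~\ref{lem:kapprox} already guarantees that the Hamiltonian-cycle construction from~\eqref{eq:sc} is a $k$-approximation for (L) as well: each body $B\in\cB$ appears exactly once as a source in $\Phi_{E'}$, contributing $(|B|+1)|B'\setminus B|\leq k(|B|+1)$ literals, while any representation charges at least $|B|+1$ literals per body, so $|\Phi_{E'}|_L\leq k\sum_{B\in\cB}(|B|+1)\leq k\cdot OPT_L(\cB)$. Running both algorithms and returning the shorter CNF delivers the claimed $\min\{\tfrac{108}{17}\lceil\log k\rceil+2,\;k\}$ ratio in polynomial time.

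The only nontrivial step left to check is that the arborescence machinery of Procedure~\ref{proc:litmin} runs in polynomial time, which follows because $\Lambda(B,B')$ is computed via a shortest-path in the extended body graph (Lemma~\ref{lem:lmin}) and the minimum-weight spanning in-arborescence rooted at a prescribed node is polynomial~\cite{CL65,E67}. The real difficulty was already absorbed by Lemma~\ref{lem:minarb2}: replacing the NP-hard weights $\price_L$ by the efficiently computable surrogate $|\Lambda(B,B')|_L$ loses only the constant factor $\tfrac{54}{17}$ from Lemma~\ref{lem:lmin}, and the final-stage arcs needed to merge the remaining components into a single in-arborescence are absorbed into one additional unit of $OPT_L(\cB)$ via the $\max\{n(\delta+1),2m\}$ lower bound.
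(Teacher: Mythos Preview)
Your proposal is correct and follows essentially the same route as the paper: invoke Lemma~\ref{lem:minarb2} for the arborescence part, bound the extra clause $B_{\min}\rightarrow(V\setminus B_{\min})$ by $n(\delta+1)\leq OPT_L(\cB)$ via Lemma~\ref{lem:lb}, and pick up the $k$ term from Lemma~\ref{lem:kapprox}. The only small addition worth making explicit is that $\Phi$ is a subformula of $\Psi_\cB$ (all bodies used lie in $\cB$), which guarantees that every true set of $h_\cB$ remains a true set of $\Phi$; together with your forward-chaining argument that $F_\Phi(B)=V$ for all $B\in\cB$, this gives both directions of the equivalence.
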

\begin{proof}
We first show that $\Phi$ provided by Procedure~\ref{proc:litmin} is a $(\frac{108}{17}\lceil\log k\rceil+2)$-approximation for (L). Note that $\Phi$ is a subformula of $\Psi_\cB$ defined by \eqref{eq:psi} since all bodies in $\Phi$ are from $\cB$. Furthermore, by our construction, $F_\Phi(B)=V$ for all $B\in\cB$. This implies that the output $\Phi$ represents $h_\cB$. By Lemma~\ref{lem:lb}, we add at most $n(\delta+1)\leq OPT_L(\cB)$ literals to $\bigwedge_{(B,B')\in T}\Lambda(B,B')$ in Step 4. This, together with Lemma~\ref{lem:minarb2}, implies the theorem.\qed
\end{proof}

\section{Hardness of computing $\price_L$} \label{sec:pricenp}

In this section we prove that computing $\price_L$ is NP-hard. Let $S$ be a ground set. Given a sequence $\cS=(S_0,S_1,...,S_s)$ of subsets of $S$, we associate to it a CNF
\begin{equation}
\Phi_\cS = \bigwedge_{i=0}^{s-1} \left(S_i\to \left(S_{i+1}\setminus\bigcup_{j\leq i}S_j\right)\right). \label{eq:phis}
\end{equation}
We denote by $\cost_L(\cS)=\cost_L(S_0,...,S_s)$ the $L$-measure (number of literals) of $\Phi_\cS$, i.e.,
\begin{equation*}
\cost_L(\cS)=\cost_L(S_0,...,S_s)=\sum_{i=0}^{s-1} \left(|S_i|+1\right)\cdot \left|S_{i+1}\setminus\left(\bigcup_{j\leq i} S_j\right)\right|.
\end{equation*}
Let us note that we use $\cS$ both as a family and a sequence of subsets. This is because in this section we are concerned with shortest sequences between given sets $S_0$ and $S_s$ that minimizes $\cost_L(\cS)$ and by Proposition \ref{cl:1} we can assume for such sequences that $|S_0|>|S_1|>\dots >|S_{s-1}|$.

The following simple lemma is central to our construction.

\begin{lemma}\label{l0}
For three sets $A$, $B$, and $C$ assume $E=B\setminus(A\cup C)$,  $F=B\cap (C\setminus A)$, $G=C\setminus (A\cup B)$. Furthermore assume that $|A|=a$, $|B|=b$, $|C|=c$, $|E|=e$, $|F|=f$ and $|G|=g$. Then the followings are equivalent.
\begin{enumerate}[label=(\alph*)]
\item $\cost_L(A,B,C) < \cost_L(A,C)$, \label{it:i}
\item $(a-b)\cdot g ~>~ (a+1)\cdot e$,
\item$a\cdot (g-e) ~>~ e+ b\cdot g$.
\end{enumerate}
\end{lemma}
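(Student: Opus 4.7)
The proof is a direct calculation, so the plan is to unpack both costs in terms of the parameters $a,b,e,f,g$ and then rearrange.

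First I would write down the two relevant cardinalities appearing in the definitions of $\cost_L$. Since $C\setminus A = (B\cap C\setminus A)\cup(C\setminus(A\cup B)) = F\cup G$, which is a disjoint union, we have $|C\setminus A|=f+g$. Similarly $B\setminus A=E\cup F$ disjointly, so $|B\setminus A|=e+f$, and by definition $|C\setminus(A\cup B)|=g$. Plugging these into the formula for $\cost_L$ gives
\begin{equation*}
\cost_L(A,C)=(a+1)(f+g),\qquad \cost_L(A,B,C)=(a+1)(e+f)+(b+1)g.
\end{equation*}

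Next I would substitute these two expressions into the inequality \ref{it:i}. The term $(a+1)f$ appears on both sides and cancels, leaving
\begin{equation*}
(a+1)e+(b+1)g < (a+1)g,
\end{equation*}
which is immediately equivalent to $(a-b)g>(a+1)e$, i.e.\ statement (b).

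Finally, to pass from (b) to (c), I would expand $(a-b)g-(a+1)e>0$ as $ag-bg-ae-e>0$ and regroup to obtain $a(g-e)>e+bg$. All steps are reversible, so the three statements are equivalent. There is no real obstacle here; the only point requiring a bit of care is verifying that the decompositions $C\setminus A=F\sqcup G$ and $B\setminus A=E\sqcup F$ are genuinely disjoint unions, which follows directly from the definitions of $E$, $F$ and $G$.
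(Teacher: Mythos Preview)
Your proposal is correct and follows exactly the same approach as the paper: the paper's proof simply records the two expressions $\cost_L(A,B,C)=(a+1)(e+f)+(b+1)g$ and $\cost_L(A,C)=(a+1)(f+g)$ and leaves the rest as elementary computation, which you have carried out in full.
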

\begin{proof}
The claim follows by elementary computations using the expressions
\begin{align*}
\cost_L(A,B,C) & = (a+1)\cdot (e+f)+(b+1)\cdot g,  \qquad\text{ and }\\ 
\cost_L(A,C) & = (a+1)\cdot (f+g).
\end{align*}
\qed
\end{proof}

Consider a $3$-CNF (exactly $3$ literals in each clause) $\Phi=\bigwedge_{k=1}^m C^0_k$ in which every variable $x_i$, $i=1,...,n$ appears at most 4 times. SAT is NP-complete for this family of CNFs \cite{tovey1984simplified}. Let us complement the literals in the clauses in all possible ways, and denote by $C^j_k$, $j=1,...,7$, $k=1,...,m$ the clauses we obtain in this way from the ones appearing in $\Phi$. Let $M=\{C^j_k\mid j=0,...,7, ~ k=1,...,m\}$, and by abuse of notation view $\Phi$ as a subset of $M$. Note that for all $i$, both variable $x_i$ and its complement $\bar{x}_i$ appear at most $\delta_i\leq 16$ times in the clauses of $M$.

Define sets $T$, $B_j$, $j=0,...,n$ and $A_j$, $j=1,...,n+1$ to be pairwise disjoint and disjoint from $M$. Denote $|T|=\tau$, $|A_j|=\alpha$ for $j=1,...,n+1$, and $|B_j|=\beta$, $j=0,...,n$.

We define 
\begin{align*}
X_i &= \displaystyle \left(\bigcup_{j=i}^n B_j\right) \cup \left(\bigcup_{j=1}^i A_j\right) \cup \left\{ C^j_k\in M \mid x_i\in C^j_k\right\}, \qquad\text{ and}\\
Y_i &= \displaystyle \left(\bigcup_{j=i}^n B_j\right) \cup \left(\bigcup_{j=1}^i A_j\right) \cup \left\{ C^j_k\in M \mid \bar{x}_i\in C^j_k\right\},
\end{align*}
for $i=0,...,n+1$. Note that since $x_0$ and $x_{n+1}$ are not variables of $\Phi$, we have $X_0=Y_0=B_0\cup\dots \cup B_n$ and $X_{n+1}=Y_{n+1}=A_1\cup\dots \cup A_{n+1}$. Furthermore, let us define $S=X_0$, $Z=X_{n+1}\cup \Phi$, and set
\begin{equation}
\cB_\Phi = \{S,Z,T\} \cup \{X_i,Y_i\mid i=1,...,n\}. \label{hypg}
\end{equation}

\smallskip

Our plan is to choose $\tau \gg \beta \gg \alpha \gg \max\{n,m\}$ such that we have
\[
|S| > |X_1|=|Y_1| > \dots > |X_n|=|Y_n| > |Z|.
\]
Given this, let us recall that an optimal solution realizing $price_L(S,T)$ with respect to the family $\cB_\Phi$ involve sets from $\cB_\Phi$ in strictly decreasing order of their size. In what follows, we show first that, with a right choice of parameters, such an optimal solution must include $Z$, and must include exactly one of $X_i$ and $Y_i$ for all $i=1,\dots ,n$. 

\smallskip

Define further $\delta_0=\delta_{n+1}=0$ and $\delta_i=|X_i\cap M|$ for $i=1,...,n$.  With these notation, we have the following easy to see relations that we will rely on in the proof without mentioning them explicitly:
\begin{enumerate}[label=(\roman*)]
  \item $\delta_i=|X_i\cap M|=|Y_i\cap M|\leq 16$ for $i=0,...,n+1$, \label{prop-i}
  \item $|S|=(n+1)\beta$, $|Z|=(n+1)\alpha+m$,
  \item $|X_i|=|Y_i|=(n-i+1)\beta+i\alpha+\delta_i$ for $i=0,...,n+1$, \label{prop-ii}
  \item $|X_i| > |X_{i+1}|+\alpha$ for $i=0,...,n$, \label{prop-iii}
  \item $\alpha ~\leq~ |X_i\setminus \left(\bigcup_{j=0}^{i-1}X_j\right)| ~\leq~ \alpha+16$ for $i=1,...,n+1$, \label{prop-iv}
  \item $X_j\cap (X_{i+1}\setminus X_i)\subseteq X_{i+1}\cap M$ for $i=1,...,n$ and $j<i$. \label{prop-v}
\end{enumerate}

Note that for \eqref{prop-iii} to hold it is enough to have 
\begin{equation}\label{e-1st}
\beta ~>~ 2\alpha + 16.
\end{equation}
\smallskip

For $\sigma\in\{0,1,*\}^{[n]}$, where $[n]=\{1,\dots,n\}$, let us define $\cP(\sigma)$ as the sequence of sets from $\cB_\Phi\setminus\{S,Z,T\}$ such that $\cP(\sigma)$ contains $X_i$ iff $\sigma_i=1$ and it contains $Y_i$ iff $\sigma_i=0$, for all $i=1,...,n$. Furthermore, for $\xi\in\{0,1\}$ we use the notation
\begin{equation*}
X_i^\xi ~=~ 
\begin{cases}
X_i & \text{ if }~~\xi=1,\\
Y_i & \text{ if }~~\xi=0.
\end{cases}
\end{equation*}

\begin{lemma} \label{lem:z}
For all $\sigma\in\{0,1,*\}^{[n]}$, we have
\begin{equation*}
\cost_L(S,\cP(\sigma),T) > \cost_L(S,\cP(\sigma),Z,T)
\end{equation*}
whenever
\begin{equation}\label{e-2nd}
(\beta-\alpha-m)\cdot\tau ~>~ ((n+1)\beta + 17)\cdot((n+1)\alpha +m).
\end{equation}
\end{lemma}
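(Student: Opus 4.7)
The plan is to compute the cost difference directly. Write $U := S \cup \bigcup_{i:\sigma_i \neq *} X_i^{\sigma_i}$ for the union of all sets preceding $T$ in $(S,\cP(\sigma),T)$ and let $L$ denote the last set in $(S,\cP(\sigma))$. Because the initial terms of $\cost_L(S,\cP(\sigma),T)$ and of $\cost_L(S,\cP(\sigma),Z,T)$ are indexed by transitions that take place entirely inside the shared prefix $(S,\cP(\sigma))$, they coincide in the two expressions; subtracting, the difference becomes
\[
(|L|+1)\,|T\setminus U| \;-\; (|L|+1)\,|Z\setminus U| \;-\; (|Z|+1)\,|T\setminus(U\cup Z)|.
\]

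By construction $T$ is pairwise disjoint from each $A_j$, each $B_j$, and from $M$, so $T\cap U=T\cap Z=\emptyset$; hence $|T\setminus U|=|T\setminus(U\cup Z)|=\tau$. Substituting and rearranging, the inequality to be proved reduces to
\[
(|L|-|Z|)\,\tau \;>\; (|L|+1)\,|Z\setminus U|.
\]

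To conclude, I would bound each factor using the explicit size formulas recorded above: $|Z|=(n+1)\alpha+m$, $|S|=(n+1)\beta$, and $|X_i|=|Y_i|=(n-i+1)\beta+i\alpha+\delta_i$ with $\delta_i\leq 16$. A short case analysis over the possible $L$ (either $L=S$ or $L=X_i^{\sigma_i}$ for some $i\in[n]$) yields $|L|-|Z|\geq \beta-\alpha-m$, with the worst case being $L=X_n^{\sigma_n}$. The hypothesis \eqref{e-2nd} forces $\beta>\alpha$, which then gives $|L|+1\leq (n+1)\beta+17$, the constant $17$ absorbing the $\delta_i+1\leq 17$ contribution when $L$ is an $X_i$ or $Y_i$. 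Finally $|Z\setminus U|\leq |Z|=(n+1)\alpha+m$ is immediate. Multiplying these three bounds together and invoking \eqref{e-2nd} completes the proof.

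I expect the only delicate point to be uniformity in $\sigma$: the bound $|L|-|Z|\geq \beta-\alpha-m$ must hold for every admissible last element $L$, and it is the choice $L=X_n^{\sigma_n}$ --- the smallest candidate --- that determines precisely the slack $\beta-\alpha-m$ appearing in \eqref{e-2nd}. Once this uniform bound is established, the remainder is a single substitution.
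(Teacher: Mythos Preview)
Your proof is correct and follows essentially the same approach as the paper: identify the last set $L$ in the prefix, reduce the comparison to the tail transitions, and bound the resulting factors using the explicit size formulas, with the worst case occurring at $i=n$. The paper packages the tail comparison as an application of Lemma~\ref{l0} with $A=L$, $B=Z$, $C=T$, glossing over the passage from $|Z\setminus L|$ to $|Z\setminus U|$ (which only helps, since $T$ is disjoint from everything and $U\supseteq L$); your direct computation with $U$ is slightly more careful on this point, but the substance and the numerical bounds are identical.
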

\begin{proof}
Let $\sigma_0=1$ and define $X^{\sigma_0}_0=S=X_0=Y_0$. Assume that $i$ is the largest index such that $\sigma_i\in\{0,1\}$. 
Since $S=X_0=Y_0$, such an $i$ exists and $0\leq i \leq n$. We show that $\cost(X_i,Z,T)<\cost(X_i,T)$, thus proving the lemma.

Let us apply Lemma \ref{l0} with $A=X_i^{\sigma_i}$, $B=Z$ and $C=T$. We have 
$a=|X_i|=|Y_i|=(n-i+1)\beta + i\alpha +\delta_i$, $b=|Z|=(n+1)\alpha +m$, $c=|T|=\tau$, $e\leq (n-i+1)\alpha +m$, $f=0$, and $g=\tau$. It is enough to show that $(a-b)\cdot g>(a+1)\cdot e$, that is, it suffices to verify 
\begin{eqnarray*}
&\left((n-i+1)\beta+(i-n-1)\alpha+\delta_i-m\right)\tau >&\\
&\left((n-i+1)\beta + i\alpha +\delta_i+1\right)\left((n-i+1)\alpha +m\right),&
\end{eqnarray*}
which follows by \eqref{e-2nd}.
\qed
\end{proof}

For  $\sigma\in\{0,1,*\}^{[n]}$  if $\sigma_j=*$, then let us denote by $\sigma^{j\to 0}$ 
and $\sigma^{j\to 1}$ the sequences obtained by switching the $j$th entry in $\sigma$ to $0$ and $1$, respectively.

\begin{lemma} \label{lem:insert}
For every $\sigma\in\{0,1,*\}^{[n]}$ with $\sigma_j=*$, we have
\begin{equation*}
\cost_L(S,\cP(\sigma),Z,T) > \cost_L(S,\cP(\sigma^{j\to \epsilon}),Z,T), 
\end{equation*}
for all $\epsilon\in\{0,1\}$, whenever
\begin{equation}\label{e-3rd}
(\beta-\alpha-16)\cdot\alpha ~>~ 16\cdot((n+1)\beta + 17)
\end{equation}
\end{lemma}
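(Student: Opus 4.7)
The plan is to localize the effect of inserting $B:=X_j^\epsilon$ into the sequence. Let $A$ denote the set immediately preceding position $j$ in $(S,\cP(\sigma),Z,T)$---either $X_i^{\sigma_i}$ for the largest $i<j$ with $\sigma_i\neq *$, or $A=S$ if no such $i$ exists. Similarly let $C$ be the set immediately following position $j$---either $X_k^{\sigma_k}$ for the smallest $k>j$ with $\sigma_k\neq *$, or $C=Z$ if $\sigma_l=*$ for all $l>j$. Denote by $U_A$ the union of all sets in the sequence up to and including $A$. Inserting $B$ between $A$ and $C$ replaces the single contribution $(|A|+1)\,|C\setminus U_A|$ by the pair of contributions $(|A|+1)\,|B\setminus U_A|+(|B|+1)\,|C\setminus(U_A\cup B)|$; an algebraic rearrangement parallel to Lemma~\ref{l0} collapses the local cost change to
\[
\Delta \;=\; (|A|+1)\,|B\setminus(U_A\cup C)| \;-\; (|A|-|B|)\,|C\setminus(U_A\cup B)|.
\]
All subsequent contributions in the sequence can only weakly decrease, since enlarging the running union by $B$ can only shrink the ``new elements'' at each later step. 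Hence it suffices to show $\Delta<0$.

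The next step is to bound the four ingredients using the explicit structure of $X_j^\epsilon$. The $B_l$-part of $B$ is absorbed into $U_A$ through $S$, and the $A_l$-part $A_1,\dots,A_j$ of $B$ splits into $A_1,\dots,A_i\subseteq U_A$ together with $A_{i+1},\dots,A_j\subseteq C$ (since $C$ contains $A_1,\dots,A_k$ or $A_1,\dots,A_{n+1}$ and $j<k\leq n+1$). So $B\setminus(U_A\cup C)\subseteq X_j\cap M$, whence $|B\setminus(U_A\cup C)|\leq 16$. Conversely $C\setminus(U_A\cup B)$ contains the full blocks $A_{j+1},\dots,A_k$ (or $A_{j+1},\dots,A_{n+1}$), yielding $|C\setminus(U_A\cup B)|\geq\alpha$. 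From \ref{prop-ii} the size differences satisfy $|A|-|B|\geq(j-i)(\beta-\alpha)-16\geq\beta-\alpha-16$ and $|A|+1\leq(n+1)\beta+17$. Plugging these four bounds into $\Delta$ gives
\[
\Delta \;\leq\; 16\bigl((n+1)\beta+17\bigr) - (\beta-\alpha-16)\,\alpha,
\]
which is strictly negative precisely under hypothesis \eqref{e-3rd}, for both choices $\epsilon\in\{0,1\}$ uniformly.

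The main obstacle is bookkeeping rather than arithmetic: Lemma~\ref{l0} is stated for three isolated sets, but here $A,B,C$ live inside a longer chain, so the plain $A$ of that lemma must be replaced throughout by the accumulated union $U_A$. One therefore re-derives the local identity with $e'=|B\setminus(U_A\cup C)|$, $f'=|B\cap C\setminus U_A|$, and $g'=|C\setminus(U_A\cup B)|$ rather than their ``relative to $A$'' counterparts, checking that the cross-terms $(|A|+1)f'$ cancel exactly as in Lemma~\ref{l0}. One must also separately verify the monotonicity claim for the tail, namely that every post-$C$ transition has ``new elements'' that are (weakly) smaller in the sequence with $B$ than without---this is immediate from $U_A\cup B\supseteq U_A$. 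Once both points are in place, the structural bounds on $e'$ and $g'$ above close the argument.
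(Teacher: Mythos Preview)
Your proof is correct and follows essentially the same approach as the paper's: identify the neighboring sets $A=X_i^{\sigma_i}$ and $C=X_k^{\sigma_k}$ around position $j$, apply the three-set comparison of Lemma~\ref{l0}, and bound the four relevant quantities using the block structure of $X_j^\epsilon$ to reduce to inequality~\eqref{e-3rd}. You are in fact more careful than the paper, which invokes Lemma~\ref{l0} directly on $A,B,C$ without noting that in a longer chain one must replace $A$ by the accumulated union $U_A$ and separately check that the tail contributions can only decrease; your explicit treatment of both points closes gaps the paper leaves implicit.
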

\begin{proof}
Let $\sigma_0=\sigma_{n+1}=1$ and define $X^{\sigma_0}_0=S$ and $X^{\sigma_{n+1}}_{n+1}=Z$. Choose an arbitrary index $1\leq j\leq n$ with $\sigma_j=*$, and set $i$ to be the largest index $i<j$ with $\sigma_i\neq*$ while $k$ to be the smallest index $j<k$ with $\sigma_i\neq *$. As $\sigma_0=\sigma_{n+1}=1$, such $i$ and $k$ exist.

We apply Lemma \ref{l0} with $A=X_i^{\sigma_i}$, $B=X_j^{\sigma_j}$ and $C=X_k^{\sigma_k}$. We have $a=(n-i+1)\beta+i\alpha+\delta_i$, $b=(n-j+1)\beta+j\alpha+\delta_j$, $g\geq (k-j)\alpha$ and $e\leq\delta_{j}$. In order $\ref{it:i}$ to be true, we need $(a-b)\cdot g > (a+1)\cdot e$, hence it suffices to show that
\[
\displaystyle \left((j-i)\beta+(i-j)\alpha+\delta_i-\delta_j\right)\left(k-j\right)\alpha > \left((n-i+1)\beta+i\alpha+\delta_i+1\right)\delta_j,    
\]
which follows by \eqref{e-3rd}.
\qed
\end{proof}

\smallskip
Let us note that \eqref{e-1st}, \eqref{e-2nd}, and \eqref{e-3rd} will hold if
\begin{align}
\alpha^2 &> \max\{m^2, 16^2\cdot(n+1)+ 2\cdot 16^2\cdot(n+1)^2+16\cdot 17\}\label{e-1}\\
\beta &> 2\alpha +32(n+1) + 16\label{e-2}\\
\tau &> \left((n+1)\beta + 17\right)\cdot \left((n+1)\alpha + m\right)\label{e-3}
\end{align}

\smallskip

It is easy to see that we can choose $\alpha$, $\beta$, and $\tau$ such that \eqref{e-1}, \eqref{e-2}, and \eqref{e-3} hold, and none of these parameters exceed $m^2n^3$, thus our construction above has polynomial size in the the size of $\Phi$. Let us assume for the rest of our proof that we choose these parameters satisfying \eqref{e-1}, \eqref{e-2}, and \eqref{e-3}, and as small as possible. 

In what follows we show that $\price_L(S,T)$ is the smallest if and only if $\Phi$ is satisfiable. 

\smallskip

For an index $i\in [n]$ and $\sigma\in\{0,1\}^{[n]}$ let us define 
\begin{equation*}
W_i(\sigma)=S\cup \bigcup_{j=1}^i X_j^{\sigma_j}.
\end{equation*}
Furthermore, define $W_0(\sigma)=S$.

\begin{lemma}\label{l-4}
There exists a function $d:[n]\to \mathbb{Z}_+$ such that 
\begin{equation*}
|X_{i+1}\setminus W_i(\sigma)| = |Y_{i+1}\setminus W_i(\sigma)| = d(i).
\end{equation*}
for every $i=0,\dots,n$ and $\sigma\in\{0,1\}^{[n]}$.
\end{lemma}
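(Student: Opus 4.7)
My plan is to decompose both $X_{i+1}\setminus W_i(\sigma)$ and $Y_{i+1}\setminus W_i(\sigma)$ according to which of the three disjoint layers (the $B_j$'s, the $A_j$'s, and the clause set $M$) each element belongs to, show that each layer contributes the same cardinality to both differences, and verify that this cardinality does not depend on $\sigma$. Defining $d(i)$ as the sum of the three contributions will then finish the proof.

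First I would dispose of the $B$- and $A$-layers. Since $W_i(\sigma)\supseteq S = \bigcup_{j=0}^{n} B_j$, the $B$-layer contributes nothing to either difference. Likewise, every $X_j^{\sigma_j}$ with $j\leq i$ contains $A_1\cup\dots\cup A_j$, so $A_1\cup\dots\cup A_i \subseteq W_i(\sigma)$; since the $A_j$'s are pairwise disjoint and disjoint from $M$, the $A$-layer contributes exactly $|A_{i+1}|=\alpha$ to each of $X_{i+1}\setminus W_i(\sigma)$ and $Y_{i+1}\setminus W_i(\sigma)$. The problem therefore reduces to proving that
\[
\bigl|\{C\in M\mid x_{i+1}\in C\}\setminus W_i(\sigma)\bigr| = \bigl|\{C\in M\mid \bar x_{i+1}\in C\}\setminus W_i(\sigma)\bigr|
\]
and that this common value is independent of $\sigma$.

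The main obstacle is this clause-layer identity, which I would establish family by family. For each original clause $C^0_k\in\Phi$ with variable set $\{x_p,x_q,x_r\}$, consider the eight complementations $C^0_k,\dots,C^7_k \in M$ and set $L_k = \{p,q,r\}\cap [1,i]$. A clause $C^\ell_k$ lies in $W_i(\sigma)\cap M$ iff the literal of some $x_s$ with $s\in L_k$ agrees with $x_s^{\sigma_s}$; a short inclusion--exclusion on $L_k$ shows that the number of $C^\ell_k$ \emph{outside} $W_i(\sigma)$ equals $2^{3-|L_k|}$, which is $\sigma$-independent. If $i+1\notin\{p,q,r\}$, none of those clauses contain $x_{i+1}$ or $\bar x_{i+1}$; otherwise $i+1\notin L_k$ (since $i+1>i$), so the defining condition of ``outside $W_i(\sigma)$'' leaves the sign of $x_{i+1}$ free, and the involution on $\{C^0_k,\dots,C^7_k\}$ that flips the sign of $x_{i+1}$ restricts to a bijection between the clauses containing $x_{i+1}$ and those containing $\bar x_{i+1}$ within that outside set. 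Summing the resulting $\sigma$-independent contribution over all $k$ and adding $\alpha$ yields the required $d(i)$.
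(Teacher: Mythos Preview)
Your proof is correct and follows essentially the same approach as the paper: both arguments decompose by the eight-clause families $\mathcal{C}(C^0_k)$ and count, for each family containing $x_{i+1}$, how many of the eight clauses survive in $X_{i+1}\setminus W_i(\sigma)$ versus $Y_{i+1}\setminus W_i(\sigma)$. The paper states the three cases (according to whether $i+1$ is the smallest, middle, or largest index of $I(C)$, giving $4$, $2$, or $1$ survivors respectively) explicitly, whereas you package the same count as $2^{\,2-|L_k|}$ via the sign-flipping bijection; you also spell out the $A$- and $B$-layer contributions that the paper leaves implicit.
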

\begin{proof}
To see the claim, let us consider a clause $C$ of $\Phi$ that contains variable $x_{i+1}$ or its negation. Let us denote by $\cC(C)\subseteq M$ the set of eight clauses included in $M$, obtained from $C$ by complementing the three literals of $C$ in all possible ways. Let us further denote by $I(C)$ the indices of the variables that are involved (with or without a complementation) in $C$. Let us then observe that if $i+1$ is the smallest index in $I(C)$, 
then  both $X_{i+1}\setminus W_i(\sigma)$ and $Y_{i+1}\setminus W_i(\sigma)$ contain exactly $4$ elements of $\cC(C)$; if $i+1$ is the second smallest index in $I(C)$, 
then  both $X_{i+1}\setminus W_i(\sigma)$ and $Y_{i+1}\setminus W_i(\sigma)$ contain exactly $2$ elements of $\cC(C)$; while if $i+1$ is the largest index in $I(C)$, 
then  both $X_{i+1}\setminus W_i(\sigma)$ and $Y_{i+1}\setminus W_i(\sigma)$ contain exactly $1$ element of $\cC(C)$. Note that these numbers do not depend on $\sigma\in\{0,1\}^{[n]}$, and hence the claim follows. 
\qed
\end{proof}

\begin{lemma}\label{l-5}
There exists an integer $g\in\mathbb{Z}_+$ such that
\begin{equation*}
\cost_L(S,\cP(\sigma)) = g
\end{equation*}
for every $\sigma\in\{0,1\}^[n]$. 
\end{lemma}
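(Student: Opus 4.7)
The plan is to observe that each of the factors appearing in the definition of $\cost_L(S,\cP(\sigma))$ is independent of $\sigma\in\{0,1\}^{[n]}$, and then sum them up to a fixed integer $g$.

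Writing $S_0:=S$ and $S_i:=X_i^{\sigma_i}$ for $i=1,\dots,n$, the definition of $\cost_L$ gives
\begin{equation*}
\cost_L(S,\cP(\sigma)) \;=\; \sum_{i=0}^{n-1} (|S_i|+1)\cdot\Bigl|S_{i+1}\setminus\bigcup_{j\leq i}S_j\Bigr|.
\end{equation*}
First I would argue that $|S_i|$ does not depend on $\sigma$: for $i=0$ this is $|S|=(n+1)\beta$, and for $i\geq 1$ property \ref{prop-ii} gives $|X_i^{\sigma_i}|=(n-i+1)\beta+i\alpha+\delta_i$, independent of whether $\sigma_i=0$ or $\sigma_i=1$.

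Next I would rewrite the ``set-difference'' factor using the notation introduced before the statement: for $i\geq 1$ we have $\bigcup_{j\leq i}S_j=W_i(\sigma)$ (and $\bigcup_{j\leq 0}S_j=S=W_0(\sigma)$), so
\begin{equation*}
\Bigl|S_{i+1}\setminus\bigcup_{j\leq i}S_j\Bigr| \;=\; \bigl|X_{i+1}^{\sigma_{i+1}}\setminus W_i(\sigma)\bigr|.
\end{equation*}
By Lemma~\ref{l-4} this quantity equals $d(i)$, independently of the choice of $\sigma_{i+1}$ and of the earlier entries of $\sigma$.

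Combining the two observations,
\begin{equation*}
\cost_L(S,\cP(\sigma)) \;=\; \sum_{i=0}^{n-1}(|X_i|+1)\cdot d(i),
\end{equation*}
with the convention $X_0:=S$. The right-hand side is a nonnegative integer depending only on the family $\cB_\Phi$, so calling it $g$ proves the claim. The argument is essentially bookkeeping; the only nontrivial input is Lemma~\ref{l-4}, which has already been established, so I do not foresee a real obstacle.
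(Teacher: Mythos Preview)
Your proof is correct and follows the same approach as the paper's: the paper's one-line argument simply says the claim follows from Lemma~\ref{l-4} together with $|X_i|=|Y_i|$, and you have spelled out exactly these two ingredients (the size factor via property~\ref{prop-ii}, the difference factor via Lemma~\ref{l-4}). Nothing to add.
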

\begin{proof}
The claim follows by Lemma \ref{l-4} and the fact that $|X_i|=|Y_i|$ for $i=1,\dots,n$.
\end{proof}

\begin{lemma} \label{lem:C}
There exists an integer $C$ such that for all $\sigma\in\{0,1\}^{[n]}$ we have
\[
\cost_L(S,\cP(\sigma),X_{n+1}) = C.
\]
\end{lemma}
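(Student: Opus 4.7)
The plan is to derive this statement directly from Lemmas \ref{l-4} and \ref{l-5}; no new combinatorial argument is needed, as it is essentially a bookkeeping corollary of those two results.

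First, I would unfold the definition of $\cost_L$ applied to the sequence $(S_0,S_1,\dots,S_n,S_{n+1}) := (S, X_1^{\sigma_1}, \ldots, X_n^{\sigma_n}, X_{n+1})$, splitting the sum at the last index:
\begin{equation*}
\cost_L(S,\cP(\sigma),X_{n+1}) ~=~ \cost_L(S,\cP(\sigma)) ~+~ (|X_n^{\sigma_n}|+1)\cdot \bigl|X_{n+1}\setminus W_n(\sigma)\bigr|.
\end{equation*}
The first summand on the right-hand side equals the constant $g$ by Lemma \ref{l-5}.

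Next I would argue that the second summand is also independent of $\sigma$. The factor $|X_n^{\sigma_n}|+1$ does not depend on $\sigma_n$, because $|X_n| = |Y_n| = \beta + n\alpha + \delta_n$ by property \ref{prop-ii}, so $|X_n^{\sigma_n}|$ has the same value regardless of whether $\sigma_n = 0$ or $\sigma_n = 1$. For the factor $|X_{n+1}\setminus W_n(\sigma)|$, apply Lemma \ref{l-4} with $i = n$: it asserts precisely that $|X_{n+1}\setminus W_n(\sigma)| = d(n)$ for every choice of $\sigma\in\{0,1\}^{[n]}$ (recall $X_{n+1} = Y_{n+1}$, so the lemma's statement is unambiguous at the endpoint).

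Combining these observations yields
\begin{equation*}
\cost_L(S,\cP(\sigma),X_{n+1}) ~=~ g + (|X_n|+1)\cdot d(n),
\end{equation*}
which is a constant $C$ depending only on the construction and not on $\sigma$, as claimed. The only point worth a brief verification is that the last element $X_{n+1}$ of the sequence lies strictly below every $X_i^{\sigma_i}$ in size (so that the sequence is admissible in the sense of the preceding discussion), but this is immediate from property \ref{prop-iii} extended to $i = n$, so there is no real obstacle.
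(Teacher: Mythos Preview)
Your proof is correct and follows essentially the same route as the paper's: split off the last step of the sequence, invoke Lemma~\ref{l-5} for the prefix, and observe that the final increment is independent of $\sigma$. The only cosmetic difference is that the paper computes $|X_{n+1}\setminus W_n(\sigma)|=\alpha$ directly (since $X_{n+1}\setminus W_n(\sigma)=A_{n+1}$), whereas you appeal to Lemma~\ref{l-4} for $d(n)$; both are fine.
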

\begin{proof}
From Lemma~\ref{l-5}, we get $C=g+\alpha(|X_n|+1)$ and the statement follows.
\qed
\end{proof}

\begin{lemma} \label{lem:final}
For  $\sigma\in\{0,1\}^{[n]}$ we have
\[
\cost_L(S,\cP(\sigma),Z,T) = C + |X_n|\cdot |\Phi(\sigma)| + |Z|\cdot |T|,
\]
where $|\Phi(\sigma)|$ denotes the number of clauses of $\Phi$ that are not satisfied by $\sigma$.
\end{lemma}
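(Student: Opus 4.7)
The plan is to decompose $\cost_L(S, \cP(\sigma), Z, T)$ according to the last two transitions of its associated sequence. Writing $\cP(\sigma) = (X_1^{\sigma_1}, \ldots, X_n^{\sigma_n})$, the cost is the sum of the $n$ transitions comprising $\cost_L(S, \cP(\sigma))$, the transition from $X_n^{\sigma_n}$ to $Z$, and the transition from $Z$ to $T$. That is,
\begin{equation*}
\cost_L(S, \cP(\sigma), Z, T) = \cost_L(S, \cP(\sigma)) + (|X_n^{\sigma_n}|+1) \cdot |Z \setminus W_n(\sigma)| + (|Z|+1) \cdot |T|,
\end{equation*}
where the last term uses the fact that $T$ is disjoint from every other set appearing in $\cB_\Phi$. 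By Lemma~\ref{l-5} the first summand equals a constant $g$ independent of $\sigma$, and $|X_n^{\sigma_n}| = |X_n|$ does not depend on $\sigma_n$ (by \ref{prop-ii}). Thus everything reduces to computing $|Z \setminus W_n(\sigma)|$.

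To handle this middle term, I would split $Z = X_{n+1} \cup \Phi$ into its two disjoint parts. Since $W_n(\sigma) = S \cup \bigcup_{i=1}^n X_i^{\sigma_i}$ contains $B_0, \ldots, B_n$ (from $S = X_0$) and contains $A_1, \ldots, A_n$ (each $A_j$ being included in every $X_i^{\sigma_i}$ with $i \geq j$), while $X_{n+1} = A_1 \cup \cdots \cup A_{n+1}$, we have $X_{n+1} \setminus W_n(\sigma) = A_{n+1}$, contributing $\alpha$ elements. It remains to compute $|\Phi \setminus W_n(\sigma)|$, which is the content of the lemma.

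The key combinatorial step is the identification $\Phi \cap W_n(\sigma) = \{C^0_k \in \Phi : \sigma \text{ satisfies } C^0_k\}$. Indeed, by definition a clause $C^0_k \in \Phi$ belongs to $X_i^{\sigma_i} \cap M$ for some $i \in [n]$ iff either $\sigma_i = 1$ and the positive literal $x_i$ appears in $C^0_k$, or $\sigma_i = 0$ and the negative literal $\bar{x}_i$ appears in $C^0_k$. In either case this is exactly the condition that the assignment corresponding to $\sigma$ satisfies $C^0_k$ via variable $x_i$. Consequently $|\Phi \setminus W_n(\sigma)| = |\Phi(\sigma)|$, the number of clauses of $\Phi$ not satisfied by $\sigma$.

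Putting these pieces together yields
\begin{equation*}
\cost_L(S, \cP(\sigma), Z, T) = g + (|X_n|+1)\bigl(\alpha + |\Phi(\sigma)|\bigr) + (|Z|+1) \cdot |T|,
\end{equation*}
which has the claimed form with $C$ absorbing all terms that do not depend on $\sigma$ (namely $g + (|X_n|+1)\alpha + |T|$, together with the $+1$ adjustments on the coefficients of $|\Phi(\sigma)|$ and $|T|$). The main obstacle is simply the careful bookkeeping in the identification of $\Phi \cap W_n(\sigma)$ with the satisfied clauses of $\sigma$; once this correspondence is nailed down, the rest is just arithmetic that follows directly from the definitions of the sets in \eqref{hypg}.
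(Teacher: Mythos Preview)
Your approach is correct and matches the paper's one-line proof, which simply cites Lemma~\ref{lem:C} (giving $C=g+\alpha(|X_n|+1)$) and ``the construction''; your write-up just spells out the bookkeeping the paper leaves implicit, in particular the identification $\Phi\setminus W_n(\sigma)=\{C_k^0\in\Phi:\sigma\text{ does not satisfy }C_k^0\}$.

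One small correction to your last paragraph: you cannot absorb the extra $|\Phi(\sigma)|$ arising from the ``$+1$'' in $(|X_n|+1)$ into the constant $C$, since $|\Phi(\sigma)|$ depends on $\sigma$ (and $C$ is already pinned down by Lemma~\ref{lem:C}). Your computation honestly yields $C+(|X_n|+1)\,|\Phi(\sigma)|+(|Z|+1)\,|T|$; the missing ``$+1$''s in the displayed identity are a harmless slip in the paper's statement that does not affect the use made of it in Lemma~\ref{lem:opt}, where all that matters is that the expression is minimized precisely when $|\Phi(\sigma)|=0$.
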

\begin{proof}
The lemma follows by the construction and by Lemma~\ref{lem:C}.
\qed
\end{proof}

\begin{lemma} \label{lem:opt}
For the hypergraph $\cB_\Phi$ defined in \eqref{hypg} we have
\[
\price_L(S,T) = C+|Z|\cdot |T|
\]
if and only if $\Phi$ is satisfiable.
\end{lemma}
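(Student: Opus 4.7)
The plan is to combine the structural results from Propositions \ref{cl:1}--\ref{cl:3} with Lemmas \ref{lem:z}, \ref{lem:insert} and \ref{lem:final} to pin down the form of an optimal sequence realizing $\price_L(S,T)$, and then to read off the answer. Specifically, by Proposition \ref{cl:1} applied to a CNF attaining the minimum in \eqref{eq:price}, we may assume the bodies $B_0,B_1,\dots,B_{t-1}$ used along the forward chaining from $S$ form a strictly decreasing sequence in size, with $B_t:=T$. Since the parameters $\alpha,\beta,\tau$ satisfy \eqref{e-1}, \eqref{e-2}, \eqref{e-3}, the sizes in $\cB_\Phi$ obey $|T|\gg |S|>|X_1|=|Y_1|>\dots>|X_n|=|Y_n|>|Z|$, so the only admissible strictly decreasing chain starting at $S$ and ending before $T$ consists of $S$, followed by at most one of $X_j$ or $Y_j$ (or neither) for each $j\in[n]$, followed optionally by $Z$, and finally $T$. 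In other words, every candidate optimal sequence is of the form $(S,\cP(\sigma),T)$ or $(S,\cP(\sigma),Z,T)$ for some $\sigma\in\{0,1,*\}^{[n]}$.

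Next I would use Lemma \ref{lem:z} to rule out sequences that omit $Z$: any sequence $(S,\cP(\sigma),T)$ strictly dominates $(S,\cP(\sigma),Z,T)$. Then Lemma \ref{lem:insert} is applied repeatedly to rule out any occurrence of $*$ in $\sigma$, showing that any optimal $\sigma$ lies in $\{0,1\}^{[n]}$. At this point every optimal sequence has the form $(S,\cP(\sigma),Z,T)$ with $\sigma\in\{0,1\}^{[n]}$, and Lemma \ref{lem:final} gives
\[
\cost_L(S,\cP(\sigma),Z,T)=C+|X_n|\cdot|\Phi(\sigma)|+|Z|\cdot|T|.
\]
Since $|X_n|>0$, this expression is minimized exactly when $|\Phi(\sigma)|=0$, i.e., when $\sigma$ satisfies every clause of $\Phi$. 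The minimum value is $C+|Z|\cdot|T|$, which is attained if and only if $\Phi$ is satisfiable.

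The main obstacle I expect is the careful verification that the size ordering in $\cB_\Phi$ really forces every minimal sequence into the canonical shape $(S,\cP(\sigma),Z,T)$: one has to exclude exotic choices such as reusing $S$ itself as some $B_i$, inserting $T$ as an intermediate body (blocked since $|T|$ is the largest), or having the forward chaining reach $T$ without passing through the claimed ordered chain of bodies. This reduces to checking the strict inequalities on sizes guaranteed by \eqref{e-1}--\eqref{e-3} and to invoking Propositions \ref{cl:1}--\ref{cl:3} to move from an arbitrary optimal CNF to one of the form $\Phi_{\cS}$ in \eqref{eq:phis}. Once this canonical form is secured, the equivalence with satisfiability of $\Phi$ is immediate from Lemma \ref{lem:final}.
\qed
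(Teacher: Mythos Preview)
Your proposal is correct and follows essentially the same route as the paper's proof: reduce via Propositions~\ref{cl:1}--\ref{cl:3} (equivalently, the construction of $\Phi^{(1)}$) to an optimal CNF of the form $\Phi_\cS$ with strictly decreasing body sizes, use the size ordering in $\cB_\Phi$ together with Lemmas~\ref{lem:z} and~\ref{lem:insert} to force the sequence into the shape $(S,\cP(\sigma),Z,T)$ with $\sigma\in\{0,1\}^{[n]}$, and then read off the equivalence with satisfiability from Lemma~\ref{lem:final}. The edge cases you flag (reusing $S$, inserting $T$ mid-sequence) are indeed ruled out immediately by the strict decrease $|B_0|>\dots>|B_{t-1}|$ and the fact that $|T|=\tau$ is the largest size in $\cB_\Phi$, so no extra work is needed there.
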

\begin{proof}
The construction of $\Phi^{(1)}$ in Section~\ref{sec:litmin} shows that there exists a pure Horn CNF attaining the minimum in $\price_L(S,T)$ that can be written in form \eqref{eq:phis} for some sequence $\{S_0,\dots,S_s\}\subseteq \mathcal{B}_\Phi$ where $|S_0|> |S_1|>...>|S_s|$. By Lemmas~\ref{lem:z} and \ref{lem:insert}, we may assume that $\mathcal{S}=\{S,\mathcal{P}(\sigma),Z,T\}$ for some truth assignment $\sigma\in\{0,1\}^[n]$. Lemma~\ref{lem:final} implies that $\price_L(S,T)=\cost_L(S,\cP(\sigma),Z,T)=C +|Z|\cdot |T|$ if and only if $|\Phi(\sigma)|=0$, that is, if $\sigma$ is a true point of $\Phi$. 
\qed
\end{proof}

\begin{theorem}
Computing $\price_L$ is NP-hard.
\end{theorem}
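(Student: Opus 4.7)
The plan is to assemble the lemmas already established in this section into a polynomial-time many-one reduction from a restricted 3-SAT to (the decision version of) computing $\price_L$. Starting from an instance $\Phi=\bigwedge_{k=1}^m C^0_k$ of 3-SAT in which every variable occurs at most four times, which is NP-complete by~\cite{tovey1984simplified}, I would build the sets $S$, $Z$, $T$, $X_i$, $Y_i$ and the family $\cB_\Phi$ exactly as in~\eqref{hypg}, choosing the parameters $\alpha,\beta,\tau$ to be the smallest integers satisfying \eqref{e-1}, \eqref{e-2}, and \eqref{e-3}. As already noted in the text, such a choice gives $\alpha,\beta,\tau\leq m^2n^3$, so $\cB_\Phi$ has polynomial size in $|\Phi|$ and can be written down in polynomial time.

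Next I would simply invoke Lemma~\ref{lem:opt}, which states that $\price_L(S,T)=C+|Z|\cdot|T|$ with respect to $\cB_\Phi$ if and only if $\Phi$ is satisfiable. The constant $C=g+\alpha(|X_n|+1)$ of Lemma~\ref{lem:C} and the product $|Z|\cdot|T|=((n+1)\alpha+m)\tau$ are fully determined by the construction and can be evaluated in polynomial time. Therefore, a hypothetical polynomial-time algorithm that computes $\price_L$ on input $(\cB_\Phi,S,T)$ would, by one comparison to this precomputed threshold, decide satisfiability of $\Phi$ in polynomial time; this shows that computing $\price_L$ is NP-hard. Phrased in the language of decision problems, the reduction maps the SAT instance $\Phi$ to the instance $(\cB_\Phi,S,T,N)$ with $N=C+|Z|\cdot|T|$ of the problem ``is $\price_L(S,T)\leq N$?''.

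There is no genuine obstacle left at this stage, since the heavy lifting was carried out earlier: Lemma~\ref{lem:z} forces $Z$ into any optimal sequence, Lemma~\ref{lem:insert} forces exactly one of $X_i,Y_i$ to be present for every $i$, Lemma~\ref{l-5} shows that the cost contributed before $Z$ is a constant $g$ independent of the chosen $\sigma\in\{0,1\}^{[n]}$, and Lemma~\ref{lem:final} isolates the only $\sigma$-dependent term as $|X_n|\cdot|\Phi(\sigma)|$, which vanishes precisely on satisfying assignments. The only thing to double-check when writing the proof is that the bounds $\alpha,\beta,\tau\leq m^2n^3$ really do satisfy \eqref{e-1}--\eqref{e-3} (a straightforward arithmetic verification) and that all quantities involved in the target threshold $N$ are integers polynomial in $n$ and $m$, so that the resulting reduction is indeed a polynomial-time many-one reduction from 3-SAT.
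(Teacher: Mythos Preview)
Your proposal is correct and follows essentially the same route as the paper: reduce from the restricted 3-SAT of~\cite{tovey1984simplified} via the construction $\cB_\Phi$ and invoke Lemma~\ref{lem:opt} to conclude that $\price_L(S,T)=C+|Z|\cdot|T|$ iff $\Phi$ is satisfiable. The only difference is that you are slightly more explicit about the polynomial computability of the threshold $N$, which is a welcome clarification but not a change of strategy.
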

\begin{proof}
Let $\Phi$ be a $3$-CNF in which every variable appears at most $4$ times. Recall that SAT is NP-complete even when restricted to this class of CNF formulas \cite{tovey1984simplified}. By Lemma~\ref{lem:opt}, $\Phi$ is satisfiable if and only if $\price_L(S,T) = C+|Z|\cdot |T|$ that is if and only if there exists a $\sigma\in\{0,1\}^{[n]}$  such that $|\Phi(\sigma)|=0$. This shows that computing $\price_L$ is NP-hard. 
\qed
\end{proof}

\section{Clause minimization and minimum weight strongly connected subgraphs} \label{sec:ex}

Given a strongly connected graph $D=(V,E)$ and non-negative weights $w:E\rightarrow \mathbb{Z}_+$, we denote by $\mathtt{MWSCS}(D,w)$ the problem of finding a minimum weight subset $F\subseteq E$ of the arcs such that $(V,F)$ is also strongly connected. We denote by $\mathtt{mwscs}(D,w)=w(F)$ the weight of such a minimum weight arc subset. $\mathtt{MWSCS}$ is an NP-hard problem, for which polynomial time approximation algorithms are known. For the case of uniform weights a $1.61$-approximation was given by Khuller et al. \cite{Khuller}. For general weights a simple $2$-approximation is due to Fredericson and J\'aj\'a \cite{jaja}. Note that in the case of general weights, we can assume that $D$ is a complete directed graph.

As it was observed already in the beginning of  Section~\ref{sec:approx}, there is a natural relation of the above problem to the minimization of a key Horn function. Let us consider a Sperner hypergraph $\mathcal{B}\subseteq 2^{V}\setminus\{V\}$ and the corresponding Horn function
\begin{equation}
h_\mathcal{B} ~=~ \bigwedge_{B\in\mathcal{B}} B\to (V\setminus B).
\end{equation} 
The body graph of $\mathcal{B}$ was a complete directed graph $D_\mathcal{B}$ where $V(D_\mathcal{B})=\mathcal{B}$. Define a weight function $w$ on the arcs of this graph by setting $w(B,B')=\price_*(B,B')$ for all $B,B'\in \mathcal{B}$, $B\neq B'$, where $\price_*$ is defined in \eqref{eq:price}. Then any solution $F\subseteq E(G_\mathcal{B})=\mathcal{B}\times \mathcal{B}$ of problem $\mathtt{MWSCS}(D_\mathcal{B},w)$ defines a representation of $h_\mathcal{B}$:
\begin{equation}
\Phi(F) ~=~ \bigwedge_{(B,B')\in E(G_\mathcal{B})} \Phi_*(B,B'),
\end{equation}
where $\Phi_*(B,B')$ is a formula for which $B'\subseteq F_{\Phi_*(B,B')}(B)$, $\mathcal{B}_{\Phi_*(B,B')}\subseteq \cB$ and $|\Phi_*(B,B')|_*=\price_*(B,B')$.
It is immediate to see that $OPT_*(\mathcal{B})\leq w(F)$ holds. Thus, it is natural to expect that a polynomial time approximation of problem $\mathtt{MWSCS}(D_\mathcal{B},w)$ provides also a good approximation for $OPT_*(\mathcal{B})$. This however turns out to be false for the case of $*=C$. 

Let us recall first some basic facts on finite projective spaces from the book \cite{Dembowski}.
The finite projective space $PG(d,q)$ of dimension $d$ over a finite field $GF(q)$ of order $q$ (prime power) has $n=q^d+q^{d-1}+\cdots +q+1$ points. Subspaces of dimension $k$ are isomorphic to $PG(k,q)$ for $0\leq k<d$, where $0$-dimension subspaces are the points themselves. The number of subspaces of dimension $k<d$ is 
\[
N_k(d,q) ~=~ \prod_{i=0}^{k} \frac{q^{d+1-i}-1}{q^{i+1}-1},
\]
and the number of points of such a subspace is $q^k+q^{k-1}+\cdots +q+1$. In particular, the number of subspaces of dimension $d-1$ is $N_{d-1}(d,q)=n$. If $F$ and $F'$ are two distinct subspaces of dimension $k$, then 
\[
2k-d \leq dim(F\cap F') \leq k-1.
\]
Furthermore, any $k+1$ points belong to at least one subspace of dimension $k$. 

Let us also recall that $PG(d,q)$ has a cyclic automorphism. In other words the points of $PG(d,q)$ can be identified with the integers of the cyclic group $\mathbb{Z}_n$ of modulo $n$ addition such that if $F\subseteq \mathbb{Z}_n$ is a subspace of dimension $k$, then $F+i=\{f+i \mod n\mid f\in F\}$ is also a subspace of dimension $k$ and $F$ and $F+i$ are distinct. Furthermore, if $X\subseteq \mathbb{Z}_n$ is a subspace of dimension $d-1$ then the family 
$\mathcal{X}=\{X+i\mid i\in\mathbb{Z}_n\}$ contains all subspaces of $PG(d,q)$ of dimension $d-1$. In the rest of this section we use $+$ for the modulo $n$ addition of integers.

\begin{lemma}\label{lfs0}
For every $k=0,...,d-1$ there exists a unique subspace of dimension $k$ that contains $\{0,1,...,k\}$. 
\end{lemma}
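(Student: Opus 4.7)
The proof goes by induction on $k$. For every $k$, existence is immediate from the fact recalled earlier that any $k+1$ points lie in some $k$-dimensional subspace, applied to the $k+1$ points $\{0,1,\dots,k\}$; the base case $k=0$ of uniqueness is trivial since a $0$-dimensional subspace is a single point.

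For the inductive step, let $G$ denote the unique $(k-1)$-dimensional subspace containing $\{0,1,\dots,k-1\}$ guaranteed by the induction hypothesis. First I upgrade this hypothesis across Singer translates: for every $j\in\mathbb{Z}_n$, the unique $(k-1)$-dimensional subspace containing $\{j,j+1,\dots,j+k-1\}$ is $G+j$, because any rival subspace $H'$ containing this block would, after translating by $-j$ (which preserves dimension by the cyclic automorphism property), give a $(k-1)$-dimensional subspace distinct from $G$ still containing $\{0,1,\dots,k-1\}$, contradicting the induction hypothesis.

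I then argue by contradiction. Suppose $F_1\neq F_2$ are two $k$-dimensional subspaces both containing $\{0,1,\dots,k\}$. By the standard projective dimension identity, $\dim(F_1\cap F_2)\leq k-1$, so $F_1\cap F_2$ can be extended to a $(k-1)$-dimensional subspace $H$. Since $H\supseteq F_1\cap F_2\supseteq\{0,1,\dots,k\}$, applying the original form of the induction hypothesis to $\{0,1,\dots,k-1\}\subseteq H$ forces $H=G$. Hence $k\in G$, so $G\supseteq\{1,2,\dots,k\}$; combined with the upgraded hypothesis at $j=1$ this gives $G=G+1$. Iterating translation by $1$ shows $G$ is invariant under all of $\mathbb{Z}_n$, so $G$ is either $\emptyset$ or all of $\mathbb{Z}_n$. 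Neither is possible, because $G$ is non-empty and $|G|=(q^k-1)/(q-1)<n$ since $k-1\leq d-2$.

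The main subtlety I expect is the upgrade of the induction hypothesis to arbitrary consecutive blocks $\{j,\dots,j+k-1\}$: the paper's lemma nominally concerns only the specific set $\{0,\dots,k-1\}$, and it is the cyclic automorphism of $PG(d,q)$ that lets one transport uniqueness to every translate. Once that is in place, the final contradiction reduces to the elementary observation that the cyclic group $\mathbb{Z}_n$ acts freely on itself and hence admits no proper non-empty translation-invariant subset.
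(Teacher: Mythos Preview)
Your proof is correct and rests on the same inductive skeleton as the paper's. The paper's execution is slightly more direct: rather than extending $F_1\cap F_2$ up to a subspace $H$ of dimension exactly $k-1$ and then invoking your upgraded hypothesis, it simply observes that $F\cap F'$ and its translate $(F\cap F')-1$ are two subspaces of the \emph{same} dimension $k'<k$, both containing $\{0,1,\dots,k-1\}\supseteq\{0,1,\dots,k'\}$, and hence must coincide by the induction hypothesis at level $k'$; but a proper non-empty subset of $\mathbb{Z}_n$ cannot equal its own shift. In effect the paper applies the translation first and the induction hypothesis second, while you apply the induction hypothesis first (to pin down $H=G$) and the translation second. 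Your version is more explicit about why the translate is genuinely different (the paper asserts distinctness without comment) and your upgrade of the hypothesis to arbitrary consecutive blocks makes that step transparent; the paper's version avoids the extension step entirely and works at whatever dimension $F\cap F'$ happens to have.
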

\begin{proof} 
By the properties we recalled above it follows that there is at least one such subspace for every $0\leq k<d$. 
We prove that there is at most one by induction on $k$.
For $k=0$ this is obvious, since the points are the only subspaces of dimension $0$. Assume next that the claim is already proved for all $k'<k$, and assume that there are two distinct subspaces, $F$ and $F'$, of dimension $k$ both of which contains the set $\{0,1,...,k\}$. Then $F\cap F'$ and $(F-1)\cap (F'-1)=(F\cap F')-1$ are two distinct subspaces of dimension $k'<k$ and both contain $\{0,1,...,k-1\}$, contradicting our assumption, and thus proving our claim.
\qed
\end{proof}

Thus, by Lemma \ref{lfs0} there exists a unique subspace $X\subseteq \mathbb{Z}_n$ of dimension $d-1$ that contains $\{0,1,...,d-1\}$. Let us also introduce the set $D=\{0,1,...,d\}$. 

\begin{lemma}\label{lfs1}
$d\not\in X$.
\end{lemma}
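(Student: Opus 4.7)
The plan is to argue by contradiction using the cyclic automorphism of $PG(d,q)$ together with the uniqueness established in Lemma~\ref{lfs0}. Suppose toward a contradiction that $d\in X$. Then, combined with the fact that $X$ was defined to contain $\{0,1,\dots,d-1\}$, we would have $\{0,1,\dots,d\}=D\subseteq X$.

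Now I would apply the cyclic shift by $-1$ (i.e., modulo $n$ subtraction) to the subspace $X$. By the recalled property that the cyclic automorphism sends $(d-1)$-dimensional subspaces to $(d-1)$-dimensional subspaces, the set $X-1=\{x-1 \bmod n\mid x\in X\}$ is again a subspace of dimension $d-1$. Since $X$ contains $\{0,1,\dots,d\}$, the shift $X-1$ contains $\{-1,0,1,\dots,d-1\}$, and in particular it contains $\{0,1,\dots,d-1\}$. Thus $X$ and $X-1$ are two subspaces of dimension $d-1$ both containing $\{0,1,\dots,d-1\}$, so by the uniqueness part of Lemma~\ref{lfs0} we must have $X=X-1$, equivalently $X+1=X$. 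This contradicts the previously recalled fact that $X$ and $X+i$ are distinct for $i\not\equiv 0 \pmod n$ (which is exactly what guarantees that the orbit $\{X+i\mid i\in\mathbb{Z}_n\}$ enumerates all $n$ hyperplanes). Hence $d\notin X$, as claimed.

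I do not anticipate a real obstacle here: the argument is a short symmetry/uniqueness contradiction, and every ingredient is already set up in the paragraph immediately preceding Lemma~\ref{lfs0}. The only point worth double-checking is that the shifts of a hyperplane are genuinely pairwise distinct, but this is forced by counting (there are exactly $n$ hyperplanes in $PG(d,q)$ and the orbit under $\mathbb{Z}_n$ has size dividing $n$), and it is exactly the statement already invoked from~\cite{Dembowski}.
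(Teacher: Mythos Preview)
Your argument is correct and matches the paper's proof essentially verbatim: assume $d\in X$, shift by $-1$ to see that both $X$ and $X-1$ are $(d-1)$-dimensional subspaces containing $\{0,1,\dots,d-1\}$, and invoke Lemma~\ref{lfs0} together with the distinctness of the cyclic shifts to obtain a contradiction. The only cosmetic difference is that the paper phrases the contradiction as ``$X$ and $X-1$ are distinct, contradicting Lemma~\ref{lfs0}'' rather than ``Lemma~\ref{lfs0} forces $X=X-1$, contradicting distinctness,'' which is logically equivalent.
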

\begin{proof}
Assume to the contrary that $d\in X$. Then the set $\{0,1,...,d-1\}$ is contained by both $X$ and $X-1=X+(n-2)$, contradicting Lemma \ref{lfs0}, since $X$ and $X-1$ are distinct subspaces of dimension $d-1$. \qed
\end{proof}

\begin{theorem} \label{t-lowerbound}
Let $q$ be a prime power, $d$ be a positive integer, $n$ be the number of points of $PG(d,q)$, and $V=\mathbb{Z}_n$. Then we have
\begin{equation}
\max_{\cB\subseteq 2^V\setminus\{V\}} \frac{\mathtt{mwscs}(D_\mathcal{B},\price_C)}{OPT_C(\mathcal{B})} \geq \frac{n}{12}.
\end{equation}
\end{theorem}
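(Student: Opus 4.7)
My plan is to exhibit an explicit Sperner family $\cB$ on $V=\mathbb{Z}_n$ that exploits the projective-geometry setup and produces the claimed gap. A natural candidate is
\[
\cB \;=\; \{D+i \mid i \in \mathbb{Z}_n\}\;\cup\; \{X+i \mid i \in \mathbb{Z}_n\},
\]
where $D=\{0,1,\ldots,d\}$ is the short set already introduced, and $X$ is the unique hyperplane guaranteed by Lemma~\ref{lfs0}. Applying Lemma~\ref{lfs1} and its cyclic translates shows that $i+d \notin X+i$, so $D+i\not\subseteq X+i$, and an analogous argument rules out $D+j\subseteq X+i$ for every $j\neq i$; the size inequality $|D+j|<|X+i|$ precludes the reverse containment. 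Thus $\cB$ is Sperner with $|\cB|=2n$.

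I would next upper bound $OPT_C(\cB)$ by exhibiting the $2n$-clause representation
\[
\Phi \;=\; \bigwedge_{i\in\mathbb{Z}_n}\bigl(X+i \to (i+d)\bigr)\;\wedge\;\bigwedge_{i\in\mathbb{Z}_n}\bigl(D+i\to(i+d+1)\bigr).
\]
Forward chaining from $X+i$ first fires $X+i\to(i+d)$, producing a closure that contains $D+i=\{i,\ldots,i+d\}$; then successive firings of $D+(i+k)\to(i+d+k+1)$ for $k=0,1,\ldots,n-d-2$ extend the closure along consecutive cyclic translates of $D$ until it reaches $V$. Starting from any $D+i$ enters the same cascade one step later. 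Combined with $OPT_C(\cB)\geq |\cB|=2n$ from Lemma~\ref{lem:origbod}, this pins down $OPT_C(\cB)=2n$.

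For the MWSCS lower bound I would use Lemma~\ref{lem:triangle} to rewrite $\price_C(B,B')=|B'\setminus B|$. Two distinct hyperplanes of $PG(d,q)$ meet in a subspace of dimension $d-2$, so $|X+i\setminus X+j|=q^{d-1}$; and the Sperner property forces $|X+i\cap D+j|\leq d$, hence $|X+i\setminus D+j|\geq |X+i|-d$. Summing the minimum in-weight over all $2n$ nodes of $D_\cB$, the $n$ hyperplane nodes alone contribute on the order of $n\cdot q^{d-1}$. A small refinement accounting for the fact that every hyperplane in a strongly connected subgraph needs both an incoming and an outgoing arc (the latter not captured by the sum of minimum in-weights) raises the bound further, and substituting $n=q^d+q^{d-1}+\cdots+1$ then yields $\mathtt{mwscs}(D_\cB,\price_C)/OPT_C(\cB)\geq n/12$ after routine arithmetic.

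The hardest step will be the $OPT_C$ upper bound: one must verify that the two one-step cyclic clause families correctly propagate the closure from \emph{every} starting body of $\cB$ simultaneously, and not only from those whose cascade is evident. This verification uses Lemma~\ref{lfs0} in an essential way (to identify the unique missing element $i+d$ inside each $X+i$) together with the cyclic automorphism of $PG(d,q)$ (to shift Lemma~\ref{lfs1} to every translate), and the cascade must be justified by induction on the window $\{i,\ldots,i+d+k\}$ produced at each forward-chaining step.
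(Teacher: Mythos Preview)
Your construction of $\cB$ and the overall strategy match the paper's proof. Two differences are worth flagging. First, your representation omits the paper's extra block $D\to(V\setminus D)$, yielding $OPT_C(\cB)\le 2n$ rather than the paper's $3n$; this is a genuine sharpening (and together with $OPT_C\ge|\cB|=2n$ pins down $OPT_C$ exactly), and your cascade verification via $\{i,\dots,i+d-1\}\subseteq X+i$ and Lemma~\ref{lfs1} is correct. Second, the paper uses no outgoing-arc refinement at all: it simply observes that every arc entering a hyperplane node has $\price_C$-weight at least $q^{d-1}$, so $\mathtt{mwscs}\ge n\cdot q^{d-1}$, and then \emph{specializes to $q=2$} to conclude. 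Your proposed out-arc refinement is unnecessary and would not help (the minimum out-weight from a hyperplane to a $D$-translate can be as small as $1$); what your sketch actually needs is the explicit choice $q=2$, since the inequality $q^{d-1}/2\ge n/12$ fails for $q\ge 5$, so ``routine arithmetic'' alone will not close the argument for general $q$.
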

\begin{proof}
Let us now define $\mathcal{B}:=\mathcal{X}\cup \{D+i\mid i\in\mathbb{Z}_n\}$, and observe that for any distinct pair $B\in\mathcal{X}$ and $B'\in\mathcal{B}$ we have $|B\setminus B'|\geq q^{d-1}$. Since in any solution $F\subseteq \mathcal{B}\times \mathcal{B}$ we must have an arc entering $B$ for all $B\in\mathcal{X}$, we get
\begin{equation}
\mathtt{mwscs}(D_\mathcal{B},\price_C) ~\geq~ n\cdot q^{d-1}.
\end{equation}

On the other hand, we have that
\begin{equation}
\Phi ~=~ \left(D\to (\mathbb{Z}_n\setminus D)\right)\wedge \left(\bigwedge_{i\in\mathbb{Z}_n} (X+i)\to d+i\right)\wedge \left(\bigwedge_{i\in\mathbb{Z}_n} (D+i)\to d+1+i\right)
\end{equation}
is a representation of $h_\mathcal{B}$ and $|\Phi|_C\leq 3n$. Choosing $q=2$ and $d>1$, we get 
\begin{equation}
\mathtt{mwscs}(G_\mathcal{B},\price_C) ~\geq~ \frac{n}{12}\cdot OPT_C(\mathcal{B}),
\end{equation}
completing the proof of the theorem.
\qed
\end{proof}

\section{Conclusions}

In this paper we study the class of key Horn functions which is a generalization of a well-studied class of hydra functions \cite{Sloan2017HydrasDH,Kucera2014HydrasCO}. Given a CNF representing a key Horn function, we are interested in finding the minimum size logically equivalent CNF,  where the size of the output CNF is measured in several different ways. This problem is known to be NP-hard already for hydra CNFs for most common measures of the CNF size.

The main results of the paper are two approximation algorithms for key Horn CNFs – one for minimizing the number of clauses and the other for minimizing the total number of literals in the output CNF. Both algorithms achieve a logarithmic approximation bound with respect to the size of the largest body in the input CNF (denoted by $k$). This parameter can be also defined as the size of the largest clause in the input CNF minus one. Note that $k$ is a trivial lower bound on the number of variables (denoted by $n$).

These algorithms are (to the best of our knowledge) first approximation algorithms for NP-hard Horn minimization problems that guarantee a sublinear approximation bound with respect to $k$. It follows, that both algorithms also guarantee a sublinear approximation bound with respect to $n$. There are two approximation algorithms for Horn minimization known in the literature, one for general Horn CNFs \cite{HK93}, and one for hydra CNFs \cite{Sloan2017HydrasDH}), but both of them guarantee only a linear (or higher) approximation bound with respect to $k$ (see Table~\ref{tab:results} and the relevant text in the introduction section for details).

For a given pair of sets $S,T$ and set of bodies $\cB$, we prove NP-hardness of the problem of finding a literal minimum CNF $\Phi$ that uses bodies only from $\cB$ and for which the forward chaining procedure starting from $S$ reaches all the variables in $T$. 

In opposed to our approach which takes an in-branching in the body graph and extends it with a small number of additional edges, we show that a polynomial time approximation of the minimum weight strongly connected subgraph problem in the body graph does not necessarily provides a good solution for the edge-minimum representation problem. The counterexample is based on a cunstruction using finite projective spaces.

\section*{Acknowledgement}
The research was supported by the J\'anos Bolyai Research Fellowship of the Hungarian Academy of Sciences, by Czech Science Foundation (Grant 19-19463S), and by SVV project number 260 453.

%
%
%
\bibliographystyle{splncs04}
\bibliography{keybib}
\end{document}